\documentclass[11pt,a4]{article}

\usepackage[USenglish]{babel}
\usepackage[utf8]{inputenc}
\usepackage[T1]{fontenc}
\usepackage[inline]{enumitem}
\usepackage{xspace}
\usepackage{amsmath}
\usepackage{amsthm}
\usepackage{amssymb}
\usepackage{xcolor}
\usepackage[colorlinks=True,
            citecolor=blue,
            urlcolor=violet,
            linkcolor=purple]{hyperref}
\usepackage{doi}
\usepackage[nameinlink, capitalize]{cleveref}
\usepackage[style=alphabetic, maxnames=99]{biblatex}
\usepackage{tikz}
\usepackage{pgfplots}
\usepackage{tcolorbox}
\usepackage{manyfoot}
\usepackage{algorithm}
\usepackage{algorithmicx}
\usepackage{algpseudocode}
\usepackage[top=25truemm,
			bottom=20truemm,
			left=20truemm,
			right=20truemm]{geometry}
\usepackage{titling}

\usetikzlibrary{plotmarks}
\usepgfplotslibrary{fillbetween}
\pgfplotsset{compat=1.16}

\tcbuselibrary{breakable}
\tcbuselibrary{skins}


\newcommand*{\quotient}[2]{\ensuremath{#1/\!\raisebox{-.40ex}{\ensuremath{\mathcal{#2}}}}}

\newtheorem{theorem}{Theorem}
\theoremstyle{plain}
\newtheorem{lemma}[theorem]{Lemma}
\newtheorem{conjecture}{Conjecture}

\newtheorem{proposition}[theorem]{Proposition}
\theoremstyle{definition}
\newtheorem{definition}[theorem]{Definition}


\def\True{{\textsc{True}}\xspace}
\def\False{{\textsc{False}}\xspace}
\def\UnfilledSet#1#2#3{{\mathrm{unfill}_{#3}(#1 \mid #2)}} 

\def\cc#1{\mathtt{#1}}
\DeclareMathOperator{\FNP}{\cc{FNP}}

\DeclareMathOperator{\TFNP}{\cc{TFNP}}
\DeclareMathOperator{\PPP}{\cc{PPP}}
\DeclareMathOperator{\PPAD}{\cc{PPAD}}

\DeclareMathOperator{\PPA}{\cc{PPA}}

\DeclareMathOperator{\PLS}{\cc{PLS}}
\DeclareMathOperator{\PWPP}{\cc{PWPP}}
\DeclareMathOperator{\EOPL}{\cc{EOPL}}

\DeclareMathOperator{\PLC}{\cc{PLC}}

\def\problem#1{\textsc{#1}\xspace}

\def\LongChoice{\problem{Long Choice}}
\def\EndofLine{\problem{End of Line}}
\def\EndofPotentialLine{\problem{End of Potential Line}}
\def\ConstrainedLongChoice{\problem{Constrained Long Choice}}

\def\UnaryLongChoice{\problem{Unary Long Choice}}
\def\QuotientPigeonCircuit{\problem{Quotient Pigeon}}
\def\PigeonCircuit{\problem{Pigeon}}
\def\LocalOPT{\problem{LocalOPT}}

\def\StonePicking{\problem{Interactive Bipartition Stone-Picking Game}}

\def\CheckSolutions{{\mathrm{CheckSolutions}}}
\addbibresource{references.bib}

\title{
	Corrigendum: $\PLS$ is contained in $\PLC$
}

\author{
	Takashi Ishizuka\\
	Artificial Intelligence Laboratory, Fujitsu Limited, Japan\\
	\href{mailto:ishizuka-t@fujitsu.com}{\texttt{ishizuka-t@fujitsu.com}}
}

\begin{document}

\maketitle


\section*{Significant Error in the Proof of \cref{lemma:Q-Pigeon_to_LongChoice}}
We must apologize for the error in the proof of \cref{lemma:Q-Pigeon_to_LongChoice}.
Our reduction from \QuotientPigeonCircuit to \ConstrainedLongChoice does not work well. Therefore, we cannot guarantee the correctness of the main theorem: $\PLS$ is contained in $\PLC$.

\begin{conjecture}
	$\PLS$ is contained in $\PLC$.
\end{conjecture}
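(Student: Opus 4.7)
The plan is to revisit the broken step from \QuotientPigeonCircuit to \ConstrainedLongChoice and either repair it or replace it by a reduction targeting a different \PLC-problem. The first avenue is to inspect precisely where the original reduction fails: presumably some ``unfilled position'' witnesses returned by the constructed \ConstrainedLongChoice instance do not decode back to a \QuotientPigeonCircuit solution. A natural repair would be to enlarge the choice alphabet with enough auxiliary bookkeeping (e.g., history data for the choices made so far) so that any unfillable position can be verified to be either a genuine \QuotientPigeonCircuit collision or itself a valid witness, with no remaining ambiguous case.

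A second, arguably cleaner avenue is to bypass \QuotientPigeonCircuit entirely and reduce a \PLS-complete problem such as \LocalOPT directly into a \PLC-problem. Concretely, one would encode PLS descent as successive choices: the $i$th choice records the vertex $N^{i}(x_{0})$ reached after $i$ improving steps from some fixed start, while the constraint circuit enforces $x_{i}=N(x_{i-1})$ together with $p(x_{i})<p(x_{i-1})$. Any maximal prefix must then terminate at a local optimum, which decodes into a \LocalOPT solution. To avoid the same trap as the original argument, one has to argue that \emph{every} unfilled-position witness, including those not of the intended form, can still be transformed back into a genuine local optimum.

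The main obstacle---and the apparent reason for the original failure---is the mismatch between the ``step-local'' semantics of PLS descent and the ``globally consistent prefix'' semantics intrinsic to \LongChoice, where an unfillable position may arise from a combinatorial obstruction unrelated to any local sink. Closing this gap likely requires either a padded encoding in which every global obstruction necessarily forces a local one, or the introduction of a finer \PLC-subproblem whose solution format natively distinguishes the two failure modes. A complementary strategy worth trying is to target \OrderedLongChoice rather than \ConstrainedLongChoice, since its ordering constraint is closer in spirit to a PLS potential and may admit a more faithful embedding of the descent structure; whether such a variant still lies in \PLC is itself a prerequisite that would have to be verified before this route can close the conjecture.
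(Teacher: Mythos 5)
This statement is explicitly a \emph{conjecture} in the paper: the corrigendum retracts the earlier claimed proof because the reduction from \QuotientPigeonCircuit to \ConstrainedLongChoice fails (Property~1 of the \citeauthor{PPY23} argument, namely that $C(b_j)\in B_i$ for all $j>i$, is no longer guaranteed once the sequence is rewritten by the sub-procedures $\beta_k$). So there is no proof in the paper to compare against; what you have written is likewise not a proof but a list of candidate strategies, each of which you yourself leave open with a caveat (``likely requires,'' ``worth trying,'' ``would have to be verified''). Nothing in the proposal closes the gap.

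Beyond that, the second avenue rests on a misreading of \LongChoice. There is no notion of an ``unfilled-position witness'' among \LongChoice solutions: a solution is a sequence $a_0,\dots,a_n$ of $n+1$ distinct elements of $[2^n]$ such that each predicate $P_i(a_0,\dots,a_i,\,\cdot\,)$ is \emph{constant} over the remaining suffix $a_{i+1},\dots,a_n$. This is a bipartition/Ramsey-style constraint, not a successor constraint; one cannot simply ``enforce $x_i=N(x_{i-1})$ and $p(x_i)<p(x_{i-1})$'' with such predicates, because $P_i$ must agree on every later choice, not just on the immediate next one. Moreover, a \LocalOPT descent chain can have length up to $2^m$, while a \LongChoice solution has only $n+1$ entries, so a literal encoding of the descent chain does not even fit. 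Your third avenue (targeting \OrderedLongChoice) explicitly requires a membership argument you do not give. In short, the proposal correctly identifies that the difficulty lies in reconciling the step-local PLS semantics with the global prefix semantics of \LongChoice, but it does not supply the missing construction, and the conjecture remains open.
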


\paragraph{Details for the Error}
In our approach, we replace the original sequence with another sequence using the sub-procedure $\beta$.
After the replacement, we apply the protocol for finding a collision provided by \citeauthor{PPY23} \cite{PPY23}.

We tried to rely on the same property that \citeauthor{PPY23} \cite{PPY23} used in showing the $\PPP$-hardness of \LongChoice (see \cref{prop:key_properties_PPY23}).
However, we cannot guarantee that the first property (see \cref{PPY23_property1}) holds since we arranged the input sequence by the above replacement.
As a consequence, our proof does not work well.

\section*{The totality of \QuotientPigeonCircuit}
We show that the search problem \QuotientPigeonCircuit is a $\TFNP$ problem. total search problem. In particular, we prove that this problem belongs to $\TFNP$.

\begin{proof}
	Since the correctness of an obtained solution is verifiable in polynomial time, it suffices to show that every \QuotientPigeonCircuit instance has at least one solution.
	Let two Boolean circuits $C: [2^n] \to [2^n]$ and $E: [2^n] \times [2^n] \to \{ 0, 1 \}$ and a special element $v^{*} \in [2^n]$ be an instance of \QuotientPigeonCircuit.
	We can check whether the Boolean circuit $E$ computes an equivalence relation over the finite set $[2^n]$ in finite time. Hence, there is no solution to the violation of the equivalence relation without loss of generality.
	
	Consider the following sequence: $u_{0} := v^{*}$ and $u_{i} := C(u_{i-1})$ for every $i \ge 1$.
	We show that there is at least one of a pair of finite integers $i$ and $j$ such that $u_{i} \not\sim_{E} u_{j}$ and $C(u_{i}) \sim_{E} C(u_{j})$ and a finite integer $i$ such that $C(u_{i}) \sim_{E} v^{*}$.
	From our assumption, $x \not\sim_{E} y$ implies that $x \neq y$. Hence, there is a finite $i$ such that $u_{i} \sim_{E} u_{j}$ for some non-negative $j < i$. Let denote $i$ the smallest such an index. Thus, we have another index $j < i$ such that $u_{j} \sim_{E} u_{i}$, and it holds that $u_{k} \not\sim_{E} u_{\ell}$ for $0 \le k < \ell < i$.
	If $j = 0$, then $u_{i-1}$ is a solution to \QuotientPigeonCircuit since $C(u_{i-1}) = u_{i} \sim_{E} u_{j} = v^{*}$. Otherwise, a pair of distinct elements $u_{i-1}$ and $u_{j-1}$ is a solution to \QuotientPigeonCircuit since it staisfies that $u_{i-1} \not\sim_{E} u_{j-1}$ and $C(u_{i-1}) = u_{i} \sim_{E} u_{j} = C(u_{j-1})$.
	
	Therefore, the problem \QuotientPigeonCircuit is a $\TFNP$ problem.
\end{proof}

\section*{Acknowledgement}
We are filled with grateful thanks towards Noah Fleming who kindly pointed out the significant error in our proof.

\newpage

\title{
	$\PLS$ is contained in $\PLC$
}

\author{
	Takashi Ishizuka\\
	Artificial Intelligence Laboratory, Fujitsu Limited, Japan\\
	\href{mailto:ishizuka-t@fujitsu.com}{\texttt{ishizuka-t@fujitsu.com}}
}

\maketitle

\begin{abstract}
	Recently, \citeauthor{PPY23} \cite{PPY23} have introduced the new $\TFNP$ subclass called $\PLC$ that contains the class $\PPP$; they also have proven that several search problems related to extremal combinatorial principles (e.g., Ramsey's theorem and the Sunflower lemma) belong to $\PLC$.
	This short paper shows that the class $\PLC$ also contains $\PLS$, a complexity class for $\TFNP$ problems that can be solved by a local search method.
	However, it is still open whether $\PLC$ contains the class $\PPA$.
\end{abstract}

\section{Introduction}
\subsection{Notation}
First of all, we present terminologies that we will use in this short paper.

We denote by $\mathbb{Z}$ the set of all integers.
For an integer $a \in \mathbb{Z}$, we define $\mathbb{Z}_{\ge a} := \{ x \in \mathbb{Z} ~:~ x \ge a \}$ and $\mathbb{Z}_{> a} := \{ x \in \mathbb{Z} ~:~ x > a\}$.
We use $[n] := \{ 1, 2, \dots, n \}$ for every positive integer $n$ in $\mathbb{Z}_{> 0}$.
Let $X$ be a finite set. We denote by $|X|$ the cardinality of the elements in $X$.
For any function $f: X \to X$ and any sequence of elements $\xi_{0}, \dots, \xi_{k}$ in $X$, the {\it unfilled set} of $X$ is defined as $X \setminus \{ f(\xi_{0}), \dots, f(\xi_{k}) \}$; we write this for $\UnfilledSet{X}{\xi_{0}, \dots, \xi_{k}}{f}$. When $X$ is a finite set of integers, for a positive integer $\kappa$, we denote $X[k]$ to be the set of $\kappa$ smallest elements of $X$; that is, $|X[\kappa]| = \kappa$ and $\xi < \eta$ for each pair of two elements $\xi \in X[\kappa]$ and $\eta \in X \setminus X[\kappa]$.

Let $\{ 0, 1 \}^{*}$ denote the set of binary strings with a finite length. 
For every string $x \in \{ 0, 1 \}^{*}$, we denote by $|x|$ the length of $x$.
For each positive integer $n$, we write $\{ 0, 1 \}^{n}$ for the set of binary strength with the length $n$.
Throughout this short paper, we sometimes regard $\{ 0, 1 \}^n$ as the set of positve integers $[2^n]$.

\paragraph{Search Problems}
Let $R \subseteq \{ 0, 1 \}^* \times \{ 0, 1 \}^*$ be a relation.
We say that $R$ is {\it polynomially balanced} if there is a polynomial $p: \mathbb{Z}_{\ge 0} \to \mathbb{Z}_{\ge 0}$ such that for each $(x, y) \in R$, it holds that $|y| \le p(|x|)$.
We say that $R$ is {\it polynomial-time decidable} if for each pair of strings $(x, y) \in \{ 0, 1 \}^* \times \{ 0, 1 \}^*$, we can decide whether $(x, y)$ belongs to $R$ in polynomial time.
We say that $R$ is {\it total} if for every string $x \in \{ 0, 1 \}^*$, there always exists at least one string $y$ such that $(x, y) \in R$.

For a relation $R \subseteq \{ 0, 1 \}^* \times \{ 0, 1 \}^*$, the search problem with respect to $R$ is defined as follows\footnote{For simplicity, we call the search problem with respect to $R$ the search problem $R$.}: Given a string $x \in \{ 0, 1 \}^*$, find a string $y \in \{ 0, 1 \}^*$ such that $(x, y) \in R$ if such a $y$ exists, otherwise reports ``{\it no}.''
When $R$ is also total, we call such a search problem a total search problem.
The complexity class $\FNP$ is the set of all search problems with respect to a polynomially balanced and polynomial-time decidable relation $R$.
The complexity class $\TFNP$ is the set of all total search problems belonging to $\FNP$. By definition, it holds that $\TFNP \subseteq \FNP$.

\paragraph{Reductions}
Let $R, S \subseteq \{ 0, 1 \}^* \times \{ 0, 1 \}^*$ be two search problems.
A polynomial-time reduction from $R$ to $S$ is defined by two polynomial-time computable functions $f: \{ 0, 1 \}^* \to \{ 0, 1 \}^*$ and $g: \{ 0, 1 \}^* \times \{ 0, 1 \}^* \to \{ 0, 1 \}^*$ satisfying that $(x, g(x, y)) \in R$ whenever $(f(x), y) \in S$.
In other words, the function $f$ maps an instance $x$ of $R$ to an instance $f(x)$ of $S$, and the other function $g$ maps a solution $y$ to the instance $f(x)$ to a solution $g(x, y)$ to the instance $x$.

For a complexity class $\mathcal{C}$, we say that a search problem $R$ is $\mathcal{C}$-hard if all search problems in $\mathcal{C}$ are polynomial-time reducible to $R$.
Furthermore, we say that a search problem $R$ is $\mathcal{C}$-complete if $R$ is $\mathcal{C}$-hard, and $R$ belongs to $\mathcal{C}$.

\subsection{Backgrounds}
Consider the following two-player game:
There are $2^n$ stones; we denote by $U_{0}$ the set of all stones.
In the first round, Player $1$ chooses one stone $a_{0}$ from $U_{0}$, then Player $2$ partitions remaining stones $U_{1} := U_{0} \setminus \{ a_{0} \}$ into two groups, denoted by $U_{1}^{0}$ and $U_{1}^{1}$.
In the second round, Player $1$ chooses one stone $a_{1}$ from $U_{1}^{b_1}$, then all stones in the opposite to $a_{1}$ (i.e., all stones in $U_{1}^{1-b_1}$) are removed from the game immediately. Player $2$ partitions all stones in the group $U_{2} := U_{1}^{b_1} \setminus \{ a_{1} \}$ into two groups, denoted by $U_{2}^{0}$ and $U_{2}^{1}$.
In the $i$-th round, Player $1$ chooses one stone $a_{i-1}$ from $U_{i-1}^{b_{i-1}}$, then all stones in $U_{i-1}^{1 - b_{i-1}}$ are removed from the game immediately. Player $2$ partitions all stones in the group $U_{i} := U_{i}^{b_{i-1}} \setminus \{ a_{i-1} \}$ into two groups $U_{i}^{0}$ and $U_{i}^{1}$.
They repeat such processes $n+1$ rounds.
Player $1$ wins if they can pick $n+1$ distinct stones $a_{0}, a_{1}, \dots, a_{n}$ at the end of the game. If Player $1$ cannot choose any stones during the above game, then Player $2$ wins.
In this paper, we call this game the \StonePicking\footnote{Remark that the term ``bipartition'' implies that, at each turn, Player $2$ partitions the current set into two groups. This game can be easily generalized to the multi-partition setting.}.

It is straightforward to see that a winning strategy for Player $1$ in \StonePicking always exists.
Recently, \citeauthor{PPY23} \cite{PPY23} have formulated the problem of finding a winning strategy for Player $1$ in \StonePicking as a $\TFNP$ problem.
Informally speaking, their problem, called \LongChoice, is to find a sequence of distinct elements that satisfies suitable properties when we are given a description of Player $2$'s action at each round.
The formal definition can be found in \cref{def:longchoice}.

$\TFNP$ problems \cite{MP91, Pap94} --- the existence of solutions guarantees, and the correctness of every solution is effortlessly checkable --- comprise a fascinating field in computational complexity theory.
It is known that many significantly important computational problems belong to the complexity class $\TFNP$. For example, finding a Nash equilibrium \cite{CDT09, DGP09}, computing a fair division \cite{FRG18, DFM22, GHH23}, integer factoring \cite{BO06, Jer16}, and algebraic problems related to cryptographies \cite{SZZ18, HV21}.
A natural way to analyze the theoretical features of a complexity class is to characterize its class by complete problems.
However, it is widely believed that $\TFNP$ has no complete problem \cite{Pud15, Pap94}.
Consequently, several $\TFNP$ subclasses with complete problems have been introduced over the past three decades.
The best well-known such classes include $\PLS$ \cite{JPY88}, $\PPAD$, $\PPA$, $\PPP$ \cite{Pap94}, $\PWPP$ \cite{Jer16}, and $\EOPL$ \cite{DP11, FGMS20, GH0MPRT22}. 

We are interested in the boundary of total search problems.
In particular, our central motive is to capture the most hard problems among {\it syntactic} $\TFNP$ problems.
Previously, \citeauthor{GP18} \cite{GP18} have introduced a $\TFNP$ problem that unifies the traditional $\TFNP$ subclasses.
However, we are unaware of another $\TFNP$ problem that unifies $\PLS$, $\PPP$, and $\PPA$.
As a first step, this short paper sheds light on the relationship between the oldest and the newest $\TFNP$ subclasses.

\subsection{Our Contributions}
We make clear the relationship between two $\TFNP$ subclasses: $\PLS$ and $\PLC$.

The complexity class $\PLS$, introduced by \citeauthor{JPY88} \cite{JPY88}, is one of the most famous $\TFNP$ subclasses.
The class $\PLS$ captures the complexity of the problems that can be solved by a local search method.
Formally, this class is defined as the set of all search problems that are reducible to \LocalOPT in polynomial time. \\

\begin{tcolorbox}
	\begin{definition}\label{def:localopt}
		\LocalOPT
		
		\noindent
		\textbf{Input}: 
		Two Boolean circuits $f : [2^n] \to [2^n]$ and $p: [2^n] \to [2^m]$.
		
		\noindent
		\textbf{Output}: 
		A point $x$ in $[2^n]$ such that $p(x) \ge p(f(x))$
	\end{definition}
\end{tcolorbox}

\begin{definition} \label{def:PLS}
	The complexity class $\PLS$ is the set of all search problems that are reducible to \LocalOPT in polynomial time.
\end{definition}

On the other hand, the complexity class $\PLC$ is the newest $\TFNP$ subclass, introduced by \citeauthor{PPY23} \cite{PPY23}.
This class is formulated as the set of all search problems that are reducible to the problem \LongChoice in polynomial time.\\

\begin{tcolorbox}
	\begin{definition}\label{def:longchoice}
		\LongChoice
		
		\noindent
		\textbf{Input}: 
		$n - 1$ Boolean circuits $P_{0}, \dots, P_{n-2}$ such that $P_{i}: ([2^n])^{i+2} \to \{0, 1\}$ for each $i \in \{0, \dots, n-2\}$ 
		
		\noindent
		\textbf{Output}: 
		A sequence of $n+1$ distinct elements $a_{0}, a_{1}, \dots, a_{n}$ in $[2^n]$ such that for each $i \in \{0, \dots, n-2\}$, $P_{i}(a_{0} \dots, a_{i}, a_{j})$ is the same for every $j > i$.
	\end{definition}
\end{tcolorbox}

\begin{definition} \label{def:PLC}
	The complexity class $\PLC$ is the set of all search problems that are reducible to \LongChoice in polynomial time.
\end{definition}

As mentioned before, the class $\PLC$ captures the complexity of finding a winning strategy for Player $1$ in \problem{Interactive Bipartition Stone-Picking Game}.
Let $P_{0}, \dots, P_{n-2}$ be a sequence of the predicates given by an instance of \LongChoice.
For each index $i \in \{ 0, \dots, n-2 \}$, the predicate $P_{i}: ([2^n])^{i+2} \to \{ 0, 1 \}$ represents Player $2$'s behavior at the $(i+1)$th round.
Then, we can easily regard a solution to \LongChoice as a winning strategy for Player $1$.

We show that the complexity class $\PLS$ is contained in $\PLC$.

\begin{theorem}[Main Contribution] \label{thm:main}
	$\PLS$ is contained in $\PLC$.	
\end{theorem}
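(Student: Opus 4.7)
The plan is to show that the $\PLS$-complete problem \LocalOPT polynomial-time reduces to \LongChoice. Following the structure of the original (erroneous) argument, I would factor the reduction through the intermediate problem \QuotientPigeonCircuit, whose totality is established earlier in this note. Given a \LocalOPT instance $(f, p)$ with $f : [2^n] \to [2^n]$ and $p : [2^n] \to [2^m]$, I would first construct a \QuotientPigeonCircuit instance $(C, E, v^{*})$ and then reduce that to a \LongChoice instance.

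For the first step, the natural choice is $C := f$, the equivalence relation $x \sim_E y \Leftrightarrow p(x) = p(y)$, and $v^{*}$ chosen (after mild padding of the domain and co-domain of $p$) to be a distinguished element of maximum potential. A type-2 solution — an $x$ with $C(x) \sim_E v^{*}$ — yields an element whose $f$-image achieves maximum potential, from which a local optimum falls out by a bounded trace of the $f$-orbit. A type-1 solution — a pair $x \not\sim_E y$ with $C(x) \sim_E C(y)$, i.e.\ $p(x) \ne p(y)$ but $p(f(x)) = p(f(y))$ — requires a short case analysis: whichever of $x, y$ has the larger potential is the candidate, and iterating $f$ a polynomial number of times suffices to produce a witness of $p(\cdot) \ge p(f(\cdot))$.

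For the second step, which is where the original manuscript erred, I would try to avoid the author's replacement procedure for the input sequence (the procedure that broke the first key property of \cite{PPY23}). Instead, I would define the predicates $P_i$ of the target \LongChoice instance to simultaneously simulate iteration of $C$ and the test of $E$: the constancy condition would force any solution sequence $a_0, \ldots, a_{n}$ to traverse distinct $E$-classes under $C$-iterates until a collision modulo $E$ becomes unavoidable. Concretely, one would incorporate into each $P_i$ both the bits of the canonical $E$-label of $a_j$ (given by $p(a_j)$) and an indicator such as $[C(a_i) \sim_E a_j]$, so that a long sequence of distinct elements cannot be completed unless some pair $a_i, a_j$ witnesses a \QuotientPigeonCircuit solution.

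The main obstacle — and the reason the original attempt and any direct variant are delicate — is that Player~2's adversarial bipartition does not respect the quotient structure: Player~2 may split a single $E$-class across both sides of a $P_i$-partition, so the constancy condition can preclude Player~1 from later choosing elements in the same $E$-class needed to witness a collision. The original paper tried to sidestep this by pre-reshaping the sequence, but, as explained in this corrigendum, the reshaping violated \cite{PPY23}'s first key property. Any successful proof will likely need predicates whose bipartitions are coarser than, or compatible with, the $E$-classes (so that no class is ever split), together with an analysis showing the resulting \LongChoice instance still admits a sequence of the required length; constructing such predicates and verifying totality is the crux of the problem.
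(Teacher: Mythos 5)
Note first that this paper is a corrigendum: the statement you are asked to prove is no longer proved in the paper at all. The first page explicitly retracts the proof of \cref{lemma:Q-Pigeon_to_LongChoice} (the reduction from \QuotientPigeonCircuit to \ConstrainedLongChoice) and downgrades \cref{thm:main} to a conjecture, because the sub-procedure $\beta$ that reshapes the input sequence destroys property~\ref{PPY23_property1} of \cref{prop:key_properties_PPY23}, on which the argument depends. So there is no correct ``paper proof'' to match your proposal against; the best one can do is compare your outline with the (now withdrawn) attempt.

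Your first step essentially reproduces \cref{lemma:PLS-hardness_of_Q-Pigeon} (take $C := f$, $x \sim_E y \Leftrightarrow p(x) = p(y)$, and a distinguished $v^*$), and that part of the paper is not in dispute. Two small caveats, though. You place $v^*$ at \emph{maximum} potential, whereas the paper normalizes so that $v^*$ has potential~$1$; both can be made to work, but the minimum-potential choice gives the cleanest recovery from a type-2 solution. More importantly, your handling of type-1 solutions (``iterating $f$ a polynomial number of times suffices'') is not justified as stated: without further preprocessing, following an $f$-orbit may take exponentially many steps before the potential stops increasing. The paper avoids this by first massaging $\langle f,p\rangle$ (\cref{prop:assumption_for_localopt}) so that every non-fixed point has $p(f(x)) = p(x) + 1$; with that normalization a type-1 solution $x \not\sim_E y$, $C(x) \sim_E C(y)$ immediately yields a local optimum among $\{x,y\}$ with no iteration at all. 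You would need some such normalization.

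The real issue is your second step, and there your proposal has a genuine gap rather than a different route. You correctly diagnose the obstacle (Player~2's bipartitions need not respect the $E$-classes, so the \LongChoice constancy condition can prevent a collision from ever being exhibited), and you correctly observe that the paper's $\beta$-reshaping trick does not repair this because it breaks \cref{PPY23_property1}. But your proposed fix --- predicates $P_i$ that ``simultaneously simulate iteration of $C$ and the test of $E$,'' with bipartitions ``coarser than, or compatible with, the $E$-classes'' --- is only a direction, not an argument: you do not construct such predicates, you do not show that they define a valid \LongChoice instance (in particular that the resulting partition refinement still halves the unfilled set, which is exactly what \cref{PPY23_property2} needs), and you explicitly concede that ``constructing such predicates and verifying totality is the crux of the problem.'' Naively making the bipartitions $E$-coarse runs into an immediate counting problem: if a single $E$-class is large, Player~2 cannot split it, and the \citeauthor{PPY23} halving argument that guarantees $n+1$ rounds fails. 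In short, your proposal reproduces the skeleton of the withdrawn proof, correctly identifies why it breaks, and then leaves open precisely the step that is open; it does not establish $\PLS \subseteq \PLC$.
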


\section{Technical Ingredients}
To prove our main theorem, we introduce a search problem, an extension of \PigeonCircuit (see \cref{def:pigeoncircuit}).
The complexity class $\PPP$, introduced by Papadimitriou \cite{Pap94}, is the class for search problems related to the pigeonhole principle. A total search problem belonging to $\PPP$ is to find a collision under the self-mapping.
For instance, consider the situation where we put $2^n$ pigeons in $2^n$ cages according to a function $C: [2^n] \to [2^n]$. Unfortunately, one of these cages is broken, denoted by $v^*$, and we cannot use it. The task of the problem is to find a collision or detect the broken cage being used.

The formal definition of the canonical $\PPP$-complete problem is as follows. \\

\begin{tcolorbox}
	\begin{definition}\label{def:pigeoncircuit}
		\PigeonCircuit
		
		\noindent
		\textbf{Input}: 
		A Boolean circuit $C : [2^n] \to [2^n]$
		and a special element $v^* \in [2^n]$
		
		\noindent
		\textbf{Output}: One of the following
		\begin{enumerate}
			\item two distinct elements $x, y \in [2^n]$ such that $C(x) = C(y)$
			\item an element $x \in [2^n]$ such that $C(x) = v^*$
		\end{enumerate}
	\end{definition}
\end{tcolorbox}

\begin{definition} \label{def:PPP}
	The complexity class $\PPP$ is the set of all search problems that are reducible to \PigeonCircuit in polynomial time.
\end{definition}

\begin{theorem}[\citeauthor{PPY23} \cite{PPY23}]
	$\PPP$ is contained in $\PLC$.
\end{theorem}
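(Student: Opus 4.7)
The plan is to exhibit a polynomial-time reduction from \PigeonCircuit, the canonical \PPP-complete problem, to \LongChoice. Given an input $(C, v^*)$ of \PigeonCircuit with $C \colon [2^n] \to [2^n]$ and $v^* \in [2^n]$, I would construct a \LongChoice instance on a slightly enlarged universe $[2^N]$ (for instance $N = n+1$), endowed with predicates $P_0, \dots, P_{N-2}$. The central construction sets $P_i(a_0,\dots,a_i,x)$ to the $i$-th bit of $\widetilde{C}(x) \oplus \widetilde{v}^*$, where $\widetilde{C}$ and $\widetilde{v}^*$ are natural extensions of $C$ and $v^*$ to the enlarged universe, with the extra ``padding'' elements handled so that $\widetilde{C}$ restricted to the padding is injective and its image avoids $\widetilde{v}^*$.

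The \LongChoice condition --- that all future elements share the same predicate value --- then forces, for every $j > i$, the $i$-th bit of $\widetilde{C}(a_j) \oplus \widetilde{v}^*$ to equal some common bit $b_i$. After all $N-1$ predicates have been consumed, the last two picks $a_{N-1}$ and $a_N$ inherit $N-1$ fixed bits of $\widetilde{C}(\cdot) \oplus \widetilde{v}^*$, leaving only a single bit of freedom. Therefore $\widetilde{C}(a_{N-1})$ and $\widetilde{C}(a_N)$ both lie in a common two-element candidate set, and the solution-extraction map performs a case analysis on the trajectory $(b_0,\dots,b_{N-2})$: in the branch where the player tracks $v^*$ (all $b_i = 0$), the candidate set is $\{\widetilde{v}^*, \widetilde{v}^* \oplus 2^{N-1}\}$, so either some $a_j$ is a preimage of $\widetilde{v}^*$ (which descends to a preimage of $v^*$ under $C$) or $\widetilde{C}(a_{N-1}) = \widetilde{C}(a_N)$, producing a collision that likewise descends to one in $C$.

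The main obstacle is handling the ``deviation'' branches, where the player's choices do not match $v^*$ bit-by-bit. A naive bit-by-bit predicate leaves two candidate values both differing from $\widetilde{v}^*$, and hence does not automatically force either a collision or a preimage. The technical heart of the proof is to choose the extension $\widetilde{C}$ and refine the predicate definitions so that deviations are either combinatorially impossible (the player is forced into the tracking branch because alternative branches exhaust their stones too quickly) or themselves productive (the padding elements are arranged so that any valid completion inside a deviation branch still yields a collision in $\widetilde{C}$ that maps back to a collision in $C$). I expect this padding-and-routing step, together with verifying that all constructions are polynomial-time computable and that the extracted witness is always valid, to be where the bulk of the technical work lies.
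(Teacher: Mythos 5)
Your construction has a genuine gap that you yourself identify but do not close. With the non-adaptive predicate $P_i(a_0,\dots,a_i,x) = \bigl(\text{bit } i \text{ of } \widetilde{C}(x) \oplus \widetilde{v}^*\bigr)$, a \LongChoice solution pins down $N-1$ bits of $\widetilde{C}(a_{N-1})\oplus\widetilde{v}^*$ and $\widetilde{C}(a_N)\oplus\widetilde{v}^*$, so both images lie in a common two-element set $\{v_1, v_2\}$. In the branch $(b_0,\dots,b_{N-2})=(0,\dots,0)$ this set contains $\widetilde{v}^*$ and every case gives a preimage or a collision, as you say. But in any deviation branch neither $v_1$ nor $v_2$ equals $\widetilde{v}^*$, and the sequence can perfectly well end with $\widetilde{C}(a_{N-1})=v_1 \neq v_2 = \widetilde{C}(a_N)$: no collision, no preimage, and no violation of the \LongChoice constraints. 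Your proposed rescues do not hold up as stated. Deviation branches are not starved of stones --- committing one bit of $\widetilde{C}(x)\oplus\widetilde{v}^*$ still leaves $2^{N-1}$ admissible $x$'s, far more than enough to pick $n+1$ distinct elements --- and ``arranging the padding so any deviation completion yields a collision'' is precisely the construction that needs to be exhibited; the padding only controls $\widetilde{C}$ on the fresh elements, while deviation branches can live entirely inside the original $[2^n]$.

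The actual \cite{PPY23} reduction avoids this by making the partition \emph{adaptive} and, crucially, by halving the \emph{unfilled} candidate set rather than a fixed bit of the codomain. It maintains sets $B_0 \supseteq B_1 \supseteq \cdots$ of candidate holes with $B_0 = [2^n]\setminus\{v^*\}$, and subsets $F_k \subseteq B_k$; the predicate $P_k$ reports whether $C(b_{k+1}) \in F_k$, and $B_{k+1}$ is set to whichever of $F_k$ or $B_k\setminus F_k$ contains $C(b_{k+1})$. The sets $F_k$ are chosen depending on the prefix $b_0,\dots,b_k$ so that the invariant $\bigl|\UnfilledSet{B_k}{b_0,\dots,b_k}{C}\bigr| = 2^{n-k}-2$ is preserved --- that is, one splits in half the holes that have \emph{not yet} been hit. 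Consequently, in \emph{every} branch, not just the tracking branch, the terminal set $B_{n-2}$ has exactly two unfilled holes, each half $F_{n-2}$ and $B_{n-2}\setminus F_{n-2}$ contains exactly one, and forcing $C(b_{n-1})$ and $C(b_n)$ into the same half yields a collision (with each other or with an earlier $b_j$). The preimage-of-$v^*$ case is dispatched up front by a WLOG transformation. This adaptive bookkeeping of the unfilled set --- which your bit predicate cannot express because it never looks at the prefix --- is exactly the ingredient that makes all deviation branches productive, and it is the piece missing from your argument.
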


From now on, we extend \PigeonCircuit to another $\TFNP$ problem called \QuotientPigeonCircuit.
Let $U$ be a finite set, and let $\sim$ denote an equivalence relation over $U$.
We now consider a \PigeonCircuit instance over the quotient set $\quotient{U}{\sim}$.
In other words, we focus on the following search problem: Given a function $C: \quotient{U}{\sim} \to \quotient{U}{\sim}$ and a special element $v^*$ in $U$, find two distinct elements $x, y \in \quotient{U}{\sim}$ such that $C(x) \sim C(y)$ or an element $x \in \quotient{U}{\sim} $ such that $C(x) \sim v^*$.
For example, consider the situation where we put $N$ books away into $M$ bookshelves, but one of these bookshelves is broken; namely, this one cannot be used. Our behavior can be represented by a function $C: [N] \to [M]$.
We sort these books by genre, which can be classified into exactly $M$ genres. Note that genre induces an equivalence relation $E$ over these books. The task of the problem is to detect that two books of different genres are stored on the same bookshelf or that the broken bookshelf is being used.

To formulate the above variant of \PigeonCircuit, we allow to obtain another function $E: U \times U \to \{ 0, 1 \}$ computing an equivalence relation over $U$.
We denote by $\sim_{E}$ the binary relation defined by $E$; for each pair of elements $x, y$ in $U$, $x \sim_E y$ if and only if $E(x, y) = 1$.
Formally, the new search problem called \QuotientPigeonCircuit is defined as follows. \\

\begin{tcolorbox}
	\begin{definition}\label{def:quotientpigeoncircuit}
		\QuotientPigeonCircuit
		
		\noindent
		\textbf{Input}: 
		Two Boolean circuits $C : [2^n] \to [2^n]$ and $E: [2^n] \times [2^n] \to \{0, 1\}$
		and an element $v^* \in [2^n]$
		
		\noindent
		\textbf{Ouput}: 
		One of the following
		\begin{enumerate}
			\item two elements $x, y \in [2^n]$ such that $x \not\sim_{E} y$ and $C(x) \sim_{E} C(y)$
			\item an element $x \in [2^n]$ such that $C(x) \sim_{E} v^*$
			\item two elements $x, y \in [2^n]$ such that $x \sim_{E} y$ and $C(x) \not\sim_{E} C(y)$
			\item an element $x \in [2^n]$ such that $E(x, x) = 0$
			\item two elements  $x, y \in [2^n]$ such that $E(x, y) \neq E(y, x)$.
			\item three distinct elements $x, y, z \in [2^n]$ such that $x \sim_E y$, $y \sim_E z$, and $x \not\sim_E z$
		\end{enumerate}
	\end{definition}
\end{tcolorbox}

Unfortunately, we are unaware of a way of syntactically enforcing the Boolean circuit $E$ to compute an equivalence relation over the finite set $[2^n]$.
Thus, we introduce violations as solutions to \QuotientPigeonCircuit to ensure that this problem belongs to $\TFNP$.
More precisely, the fourth-type solution is a violation of the {\it reflexivity}.
The fifth-type solution represents a violation of the {\it symmetry}.
Finally, the sixth-type solution means a violation of the {\it transivity}.

\begin{proposition}
	\QuotientPigeonCircuit is $\PPP$-hard.
\end{proposition}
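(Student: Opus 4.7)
The plan is to exhibit a direct polynomial-time reduction from \PigeonCircuit to \QuotientPigeonCircuit, which suffices because \PigeonCircuit is $\PPP$-complete. The guiding observation is that \QuotientPigeonCircuit is a syntactic generalization of \PigeonCircuit: if we instantiate the equivalence relation as equality on $[2^n]$, then the notion of a ``quotient collision'' collapses to an ordinary collision, and the violation-type solutions become unreachable for free.

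Concretely, given an instance $(C, v^{*})$ of \PigeonCircuit, I would output the \QuotientPigeonCircuit instance $(C, E_{=}, v^{*})$, where $E_{=}\colon [2^n] \times [2^n] \to \{0,1\}$ is a polynomial-size circuit that checks bit-wise equality of its two $n$-bit inputs, so that $x \sim_{E_{=}} y$ if and only if $x = y$. The function $f$ of the reduction is therefore simply $(C, v^{*}) \mapsto (C, E_{=}, v^{*})$, which is computable in polynomial time since $E_{=}$ can be built by a trivial circuit of size $O(n)$.

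The solution map $g$ is then read off from \cref{def:quotientpigeoncircuit}. A type-1 solution $(x, y)$ satisfies $x \ne y$ and $C(x) = C(y)$, which is exactly a type-1 solution to the original \PigeonCircuit instance; a type-2 solution $x$ satisfies $C(x) = v^{*}$, which is a type-2 solution of the original. It remains to check that the other four solution types cannot appear. A type-3 solution would require $x = y$ with $C(x) \ne C(y)$, impossible because $C$ is a function; and type-4, type-5, and type-6 solutions are violations of reflexivity, symmetry, and transitivity of $\sim_{E_{=}}$, all of which hold by construction of $E_{=}$.

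I do not expect a genuine obstacle here: the proposition is essentially immediate once one recognizes \QuotientPigeonCircuit as a conservative extension of \PigeonCircuit. The only point that deserves a sentence of care in the write-up is the explicit construction of $E_{=}$ as a polynomial-size Boolean circuit, together with the brief case analysis above verifying that no ``spurious'' equivalence-violation solutions can arise from the instance produced by the reduction.
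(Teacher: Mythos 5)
Your proposal is correct and coincides with the paper's proof, which likewise reduces \PigeonCircuit to \QuotientPigeonCircuit by taking $E$ to be the equality relation $E(x,y)=1 \iff x=y$. The extra case analysis you give (ruling out solution types 3--6) is implicit in the paper's one-sentence argument but otherwise the reduction is identical.
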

\begin{proof}
	It suffices to define the Boolean circuit $E: [2^n] \times [2^n] \to \{0, 1\}$ as $E(x, y) = 1$ if and only if $x = y$ for all $x, y \in [2^n]$.
\end{proof}

Before closing this section, we observe the useful properties of \QuotientPigeonCircuit.
First, we show that we can assume that there is no fixed point for any \QuotientPigeonCircuit without loss of generality.

\begin{proposition} \label{prop:no-fixedpoint}
	Let $\langle C: [2^n] \to [2^n], E: [2^n] \times [2^n] \to \{0, 1\}, v^* \in [2^n] \rangle$ be an instance of \QuotientPigeonCircuit.
	We can assume that there is no element $x \in [2^n]$ such that $C(x) \sim_{E} x$ or $C(x) = v^*$ without loss of generality.
\end{proposition}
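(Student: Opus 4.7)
The plan is to exhibit a polynomial-time self-reduction from an arbitrary \QuotientPigeonCircuit instance to one in which no such $x$ exists. Given $\langle C, E, v^* \rangle$ on $[2^n]$, I would build a new instance whose domain is $D' := ([2^n] \times \{0, 1\}) \cup \{\alpha\}$ for a fresh token $\alpha$, encoded inside $[2^{n+2}]$ with any leftover slots padded harmlessly. I would then define $E'((x, a), (y, b)) := E(x, y) \wedge [a = b]$, make $\{\alpha\}$ a singleton $\sim_{E'}$-class, set $v^{*\prime} := \alpha$, and declare $C'(x, 0) := (C(x), 1)$, $C'(x, 1) := (C(x), 0)$, and $C'(\alpha) := (v^*, 1)$.

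The desired property then follows immediately: for every $y = (x, a)$, the map $C'$ flips the second coordinate, so $C'(y)$ has bit $1-a$ and is therefore neither $\sim_{E'} y$ nor equal to $\alpha$; and $C'(\alpha) = (v^*, 1)$ lies outside the singleton class $\{\alpha\}$. Moreover, the image of $C'$ is entirely contained in $[2^n] \times \{0, 1\}$, so no element can map to $v^{*\prime} = \alpha$. The key idea is that the bit-flip trick breaks potential fixed points of the original $C$ automatically, without any conditional logic based on whether $C(x) \sim_E x$ holds in the original instance.

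The remaining work is to exhibit a polynomial-time back-map sending every solution of the new instance to a solution of the original; I would do this by direct case analysis. A type-1 collision in the new instance must be either a same-bit pair $((x, a), (y, a))$ with $x \not\sim_E y$ and $C(x) \sim_E C(y)$, which is exactly a type-1 solution of the original, or a mixed pair $((x, 0), \alpha)$ with $C(x) \sim_E v^*$, which certifies that $x$ is a type-2 solution of the original. Type-3 through type-6 solutions of the new instance descend to the corresponding equivalence-axiom violations of $E$ by simply stripping the second coordinate.

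The main subtlety---though not really a deep obstacle---is ensuring that $\alpha$ and any padding elements added to make $|D'|$ a power of two do not create spurious solution types with no preimage in the original. I would handle this by keeping $\alpha$ in its own singleton class, grouping all padding elements into one equivalence class disjoint from the rest, and routing each padding element into $[2^n] \times \{0, 1\}$ (for instance, to $(v^*, 0)$) so that it cannot collide with $C'(\alpha) = (v^*, 1)$; the only new type-1 solutions that can then arise involve a padding element together with some $(x, 1)$, and these again translate to $x$ being a type-2 solution of the original.
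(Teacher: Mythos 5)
Your proposal is correct, and it shares the paper's central device---the bit-flip doubling of the domain, which simultaneously kills fixed points under $\sim_{E'}$ (since equivalent elements must agree on the extra bit, which $C'$ always flips)---but it handles the special element differently. The paper keeps the domain at exactly $\{0,1\}\times[2^n]$, designates $(0,v^*)$ as the new special element, and first pre-processes $C$ by rerouting every output that is $\sim_E v^*$ to a fixed $u^*\neq v^*$, so that $C(x)\neq v^*$ and hence $C'(b,x)=(1-b,C(x))$ can never equal $(0,v^*)$. You instead adjoin a brand-new token $\alpha$ in its own singleton class as the special element; since $C'$ never outputs $\alpha$, the condition $C'(y)\neq v^{*\prime}$ holds vacuously and no pre-modification of $C$ is needed. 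The trade-off is that your domain has size $2^{n+1}+1$, which is not a power of two, so you must pad up to $[2^{n+2}]$ and verify that the padding class and $\alpha$ introduce no spurious solutions---you do check this correctly (padding elements route to $(v^*,0)$, which cannot collide with $C'(\alpha)=(v^*,1)$, and any padding-vs-$(x,1)$ collision yields $C(x)\sim_E v^*$, a type-2 solution of the original). Your case analysis of the back-map across all six solution types is sound: same-bit pairs descend to type-1, mixed pairs involving $\alpha$ or padding descend to type-2, and the equivalence-axiom violations strip down coordinate-wise because $E'$ inherits reflexivity, symmetry, and transitivity from $E$ on each copy. Net effect: your construction is a bit more verbose due to padding but avoids the paper's rerouting step; both are correct reductions that establish the proposition.
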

\begin{proof}
	We first redefine the Boolean circuit $C$ as follows: For each element $x \in [2^n]$, $C(x) := u^*$ if $C(x) \sim_{E} v^*$, otherwise we do not modify the output of $C(x)$. Here, we pick arbitrary element $u^*$ in $[2^n]$ with $u^* \neq v^*$.
	It is straightforward to see that we can recover a solution to the original instance from a solution to the modified instance.
	
	We write $U$ for the set $\{ 0, 1 \} \times [2^n]$.
	We construct new two Boolean circuits $C': U \to U$ and $E': U \times U \to \{ 0, 1 \}$ as follows.
	For each element $(b, x) \in U$, we define $C'(b, x) := ( 1-b, C(x) )$.
	For each pair of two elements $(b, x)$ and $(c, y)$ in $U$, define $E'((b, x), (c, y)) = 1$ if and only if $b = c$ and $E(x, y) = 1$.
	Informally speaking, we create a copy of equivalent classes induced by $E$.
	Finally, we define $(0, v^*)$ in $U$ to be a special element for the reduced instance of \QuotientPigeonCircuit.
	
	Since the first bit is always flipped by the Boolean function $C'$, there is no fixed point of $C'$. Furthermore, it holds that $\xi \not\sim_{E'} \eta$ for all elements $\xi, \eta \in U$ whose first bits are different.
	This implies that there is no element $\xi \in U$ such that $C'(x) \sim_{E}$.
	
	From our construction, it is easy to see that we can efficiently recover an original solution from a solution to the reduced instance.
\end{proof}

Next, we show that we can suppose that there exist at least $2n$ equivalent classes without loss of generality.
More precisely, we can assume that the elements $u_{0} := v^{*}, u_{1} := C(u_{0}), \dots, u_{i} := C(u_{i-1}), \dots, u_{2n}$ are distinct under $\sim_{E}$, i.e., $u_{i} \not\sim_{E} u_{j}$ for all $0 \le i < j \le 2n$, without loss of generality.

\begin{proposition} \label{prop:assumption_for_quotient-pigeon}
	For every non-trivial\footnote{We say that an instance is {\it trivial} if it can be solved by a n\"{i}ve approach in polynomial time.} instance of \QuotientPigeonCircuit $\langle C: [2^n] \to [2^n], E: [2^n] \times [2^n] \to \{0, 1\}, v^* \in [2^n] \rangle$, we can assume that the elements $u_{0} := v^{*}, u_{1} := C(u_{0}), \dots, u_{i} := C(u_{i-1}), \dots, u_{2n}$ are distinct under $\sim_{E}$, i.e., $u_{i} \not\sim_{E} u_{j}$ for all $0 \le i < j \le 2n$, without loss of generality.
\end{proposition}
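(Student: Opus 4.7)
The approach is to observe that the totality argument for \QuotientPigeonCircuit already exhibits a \QuotientPigeonCircuit solution whenever the desired assumption fails within the first $2n+1$ iterates of $C$. So the plan is to turn that observation into a polynomial-time procedure that either outputs such a solution or certifies the assumption, and then to appeal to the footnote's notion of ``non-trivial'' instance to dispose of the former case.

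First I would compute $u_{0} := v^{*}$ and $u_{i} := C(u_{i-1})$ for $1 \le i \le 2n$, and for every pair $0 \le j \le i \le 2n$ evaluate both $E(u_{i}, u_{j})$ and $E(u_{j}, u_{i})$. Any diagonal entry with $E(u_{i}, u_{i}) = 0$ is returned as a type-4 solution, and any asymmetry $E(u_{i}, u_{j}) \neq E(u_{j}, u_{i})$ as a type-5 solution; transitivity is not required at this stage. Next, if some off-diagonal pair satisfies $E(u_{i}, u_{j}) = 1$ with $j < i$, I would pick the smallest such $i$ together with any witnessing $j$. When $j = 0$ we have $C(u_{i-1}) = u_{i} \sim_{E} v^{*}$, so $u_{i-1}$ is a type-2 solution. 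When $j \ge 1$, minimality of $i$ forces $u_{j-1} \not\sim_{E} u_{i-1}$ (since $j-1 < i-1 < i$), while $C(u_{j-1}) = u_{j} \sim_{E} u_{i} = C(u_{i-1})$, so $(u_{j-1}, u_{i-1})$ is a type-1 solution.

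The procedure uses $O(n)$ evaluations of $C$ and $O(n^{2})$ evaluations of $E$, so it runs in polynomial time. Hence a non-trivial instance must cause it to fall through without output, which is exactly the assertion that $u_{0}, \ldots, u_{2n}$ are pairwise $\sim_{E}$-distinct. The step that most deserves care, and what I expect to be the main obstacle, is that $E$ is not guaranteed to compute a true equivalence relation: the reflexivity and symmetry checks above are precisely what let us treat the $0/1$ outputs of $E$ on the prefix as a consistent non-equivalence relation, so that the subsequent case split produces genuine type-1 or type-2 solutions rather than ``solutions'' that the verifier would later reject. Beyond this point the argument becomes a mechanical replay of the totality proof.
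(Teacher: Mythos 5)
Your argument matches the paper's proof of \cref{prop:assumption_for_quotient-pigeon}: iterate $C$ from $v^*$, locate the first $\sim_E$-collision among $u_0, \dots, u_{2n}$, and extract a type-2 solution ($u_{i-1}$ when $j = 0$) or a type-1 solution ($(u_{j-1}, u_{i-1})$ when $j \ge 1$) from minimality, which the paper phrases as an induction on the prefix length. Your explicit pre-pass for reflexivity (type-4) and symmetry (type-5) violations is a welcome refinement, since the paper's step deducing $u_i \not\sim_E u_{j-1}$ from the inductive hypothesis $u_{j-1} \not\sim_E u_i$ silently assumes $E$ is symmetric.
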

\begin{proof}
	It is sufficient to prove that we can easily recover a solution to the original \QuotientPigeonCircuit instance when there are two elements  $u_{i}$ and $u_{j}$ such that $u_{i} \sim_{E} u_{j}$.
	We prove this fact by induction.
	
	In the base case, we can assume that $u_{0} \not\sim_{E} u_{1} = C(u_{0})$ from \cref{prop:no-fixedpoint}.
	
	In the inductive case, suppose that the elements $u_{0} := v^{*}, u_{1} := C(u_{0}), \dots, u_{i} := C(u_{i-1})$ are distinct under $\sim_{E}$ for some positive integer $i < 2n$.
	If the element $u_{i+1} := C(u_{i})$ collided with an element $u_{j}$ under $\sim_{E}$ for some $0 \le j \le i$, we can effortlessly recover a solution to the original \QuotientPigeonCircuit instance as follows:
	The element $u_{i}$ is the second-type solution when $j = 0$; and the elements $u_{i}, u_{j-1}$ is the first-type solution since $u_{i} \not\sim_{E} u_{j-1}$ and $C(u_{i}) \sim_{E} u_{j-1}$ when $j > 0$.
\end{proof}

\section{Proof of Our Main Theorem} \label{Section:Proof_of_Main_Theorem}
\cref{thm:main} immediately follows from the following two lemmata.

\begin{lemma} \label{lemma:PLS-hardness_of_Q-Pigeon}
	\QuotientPigeonCircuit is $\PLS$-hard.
\end{lemma}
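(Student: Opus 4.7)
The plan is to prove the lemma by polynomial-time reducing the $\PLS$-complete problem \LocalOPT to \QuotientPigeonCircuit. Given an instance $\langle f: [2^n] \to [2^n],\, p: [2^n] \to [2^m]\rangle$ of \LocalOPT, I intend to make the equivalence classes of the quotient-pigeon instance correspond to the level sets of $p$, with the circuit $C$ implementing one step of local improvement. Concretely, I would work in a universe $U = [2^{n+1}]$, identify $[2^n]$ with the first half, designate a distinguished element $\bot \in U \setminus [2^n]$, and place every element of $U \setminus ([2^n] \cup \{\bot\})$ into $\bot$'s class. Define $E(x,y) = 1$ iff either $x,y \in [2^n]$ with $p(x) = p(y)$, or $x,y \in U \setminus [2^n]$; set $v^* := \bot$. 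Then $E$ is visibly reflexive, symmetric, and transitive, so no solutions of types 4--6 arise. The circuit is $C(x) := f(x)$ for $x \in [2^n]$ with $p(f(x)) > p(x)$, $C(x) := \bot$ for $x \in [2^n]$ with $p(f(x)) \le p(x)$ (the local optima), and $C(x) := 1 \in [2^n]$ for $x \in U \setminus [2^n]$, so that the $\bot$-class does not trivially satisfy type 2.

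First I would verify the easy direction. A type-2 solution is any $x$ with $C(x) \sim_E \bot$; for $x \in U \setminus [2^n]$ we have $C(x) = 1 \not\sim_E \bot$, ruling such witnesses out, and for $x \in [2^n]$ the condition forces $C(x) = \bot$ (since the $\bot$-class contains no element of $[2^n]$), which by construction means $x$ is a local optimum of $(f,p)$. So every type-2 solution converts directly into a \LocalOPT solution.

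The main obstacle is handling type-1 and type-3 solutions in the hard case where both witnesses are non-local-optima in $[2^n]$. There a type-1 witness produces $p(x) \ne p(y)$ together with $p(f(x)) = p(f(y))$, and a type-3 witness produces $p(x) = p(y)$ together with $p(f(x)) \ne p(f(y))$, neither of which is immediately an index at which $f$ fails to increase the potential. To overcome this, my plan is to enforce that $C$ is well-defined on the quotient by construction, eliminating type-3 solutions entirely and forcing every surviving type-1 collision to route some class into $[v^*] = \{\bot\}$, thereby exposing a local optimum. A natural implementation is to lift the construction to pairs $(x,v) \in [2^n] \times [2^m]$ with $E$ declared as ``same second coordinate,'' and to design $C$ to act on each class through an implicit canonical representative of the corresponding level; inconsistencies between $x$ and its declared level $v$ can then be arranged so that they themselves appear as type-1 or type-3 witnesses which, by a careful case analysis, also expose a local optimum.

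I expect the hardest step of the proof to be precisely this canonicalization, which must be carried out in polynomial time despite level sets being potentially exponentially large, together with the accompanying exhaustive case analysis over the six solution types verifying that each one converts in polynomial time back to a local optimum of $(f,p)$. The underlying pigeonhole argument is routine (there are at most $2^m + 1$ classes, and $C$ targets only the $2^m$ non-$\bot$ classes, forcing a solution to exist), and the bookkeeping around polynomial-size Boolean circuits for $C$ and $E$ and the recovery map of the reduction follows by standard techniques once the level-uniform construction is set up correctly.
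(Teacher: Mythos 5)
Your reduction (\(E\) identifies points with equal potential, \(C\) follows \(f\), \(v^*\) is a distinguished minimum-potential point) is the same basic scheme as the paper's, and your type-2 analysis is fine. But you correctly diagnose the problem with types 1 and 3 and then do not actually close it: you propose to ``enforce that \(C\) is well-defined on the quotient by construction'' via a canonical representative of each potential level, and you yourself flag that this canonicalization must be done in polynomial time even though level sets \(p^{-1}(v)\) may be exponentially large. That is not a bookkeeping step --- it is the whole obstruction, and as stated it does not go through: there is no polynomial-time way, given only the circuit \(p\), to map an arbitrary level \(v\) to a designated point in \(p^{-1}(v)\), so \(C\) cannot be made constant on equivalence classes this way. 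Your ``lift to \((x,v)\in[2^n]\times[2^m]\)'' idea is also in the right direction but does not, as described, get you well-definedness.

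The paper sidesteps well-definedness entirely. Instead of canonicalizing, it first normalizes the \LocalOPT instance (\cref{prop:assumption_for_localopt}, following \cite[Theorem~4]{FGMS20}) so that \emph{every non-fixed point increases its potential by exactly one}: the domain is enlarged to pairs \((i,x)\in[2^m]\times[2^n]\) and the successor circuit ``walks'' the counter \(i\) one step at a time from \(p(x)\) up to \(p(f(x))\) before moving to \(f(x)\). After this preprocessing, \(C:=f\) on the new domain, \(E\) still identifies equal potentials, and the recovery from types 1 and 3 is a two-line arithmetic contradiction: if neither \(x\) nor \(y\) were a fixed point, then \(p(f(x))=p(x)+1\) and \(p(f(y))=p(y)+1\); a type-1 witness gives \(p(x)\neq p(y)\) yet \(p(f(x))=p(f(y))\), impossible, and a type-3 witness gives \(p(x)=p(y)\) yet \(p(f(x))\neq p(f(y))\), also impossible. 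So at least one of the two witnesses is a fixed point, hence a local optimum. The missing idea in your proposal is precisely this unit-step (gradient) normalization: it replaces the impossible task of making \(C\) respect the quotient by the feasible task of making the potential drift so slowly that any level collision or split forces a fixed point into view.
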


\begin{lemma} \label{lemma:Q-Pigeon_to_LongChoice}
	\QuotientPigeonCircuit belongs to $\PLC$.
\end{lemma}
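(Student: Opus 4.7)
The plan is to adapt the reduction from \PigeonCircuit to \LongChoice of \citeauthor{PPY23} \cite{PPY23} so that it works up to the equivalence relation $\sim_E$. Given an instance $(C, E, v^*)$ of \QuotientPigeonCircuit on $[2^n]$, I would first invoke \cref{prop:no-fixedpoint} and \cref{prop:assumption_for_quotient-pigeon} to assume without loss of generality that $C$ has no fixed point modulo $\sim_E$, that its image avoids the class of $v^*$, and that the iterates $u_0 := v^*,\; u_1 := C(v^*),\; \dots,\; u_{2n}$ are pairwise $\sim_E$-inequivalent. Any failure of these assumptions is already a solution of type 1 or 2, or a violation of types 3--6 extractable in polynomial time.

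Next, I would construct predicates $P_0, \dots, P_{N-2}$ on a domain $[2^N]$ with $N$ polynomially related to $n$. The design template, following the construction behind \cref{prop:key_properties_PPY23}, is to maintain inductively an ``unfilled set'' of $\sim_E$-equivalence classes not yet covered by $\{C(a_0), \dots, C(a_i)\}$, and to set $P_i(a_0, \dots, a_i, y)$ to compare the class of $C(y)$ against a canonically chosen half of this unfilled set (e.g., by fixing the minimum integer representative of each class and partitioning via a binary-search-style criterion). Correctness would then have two parts: (i) the $P_i$ are polynomial-size Boolean circuits, since $E$ lets us test class membership and $C$ gives images; and (ii) from a winning sequence $a_0, \dots, a_N$ of $N+1$ distinct elements, the pigeonhole/counting argument behind PPY23, transported to the quotient, forces one of three outcomes: two elements $a_i, a_j$ with $a_i \not\sim_E a_j$ yet $C(a_i) \sim_E C(a_j)$, an $a_i$ with $C(a_i) \sim_E v^*$, or a detected violation of the equivalence-relation axioms by $E$ during the on-the-fly selection of canonical representatives.

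The hard part is ensuring that the inductive invariant underlying \cref{prop:key_properties_PPY23} survives the passage from elements to equivalence classes. Representing a class by its minimum element requires rearranging the sequence of previously chosen elements before each predicate is evaluated, and each such rearrangement risks breaking the monotone ``unfilled set'' invariant on which PPY23 rely --- this is exactly the step where the naive use of a sub-procedure $\beta$ fails. A workable route would be either to design the predicates so that they never need to commit to a rearrangement (treating classes abstractly while the circuit extracts an equivalence-axiom violation whenever two representations disagree), or to prove a strengthened counting lemma that tolerates the rearrangement. Settling one of these is the crux of the argument, and is also what currently prevents upgrading \cref{thm:main} from a conjecture back to a theorem.
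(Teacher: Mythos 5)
Your proposal is not a proof but a correct diagnosis of why the natural approach fails, and it matches the error that the paper's corrigendum itself acknowledges. The withdrawn proof of this lemma is built from exactly the ingredients you describe: sub-procedures $\beta_0, \dots, \beta_{n-1}$ that replace the freely chosen sequence $(a_0, \dots, a_{k+1})$ with an arranged sequence $(b_0, \dots, b_{k+1})$ whose entries are pairwise $\sim_E$-inequivalent (substituting an iterate $C^{\ell}(v^*)$ for $a_{k+1}$ whenever it collides under $\sim_E$ with some previous $b_i$ and $\CheckSolutions$ rejects), followed by the PPY23 halving scheme applied to the $b$-sequence. The flaw you flag --- that the rearrangement by $\beta$ breaks the monotone unfilled-set invariant --- is exactly the one the corrigendum states: once $b_{k+1}$ is a substitute determined by $\beta_k$ rather than a free choice from $B_k$, there is no guarantee that $C(b_j) \in B_i$ for $j > i$ (\cref{PPY23_property1} of \cref{prop:key_properties_PPY23}), and without that invariant both the counting in \cref{PPY23_property2} and the terminal collision argument at step $n-2$ collapse.

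So you and the paper take the same route and founder at the same step. The two repair strategies you sketch at the end --- designing the predicates so that no commitment to a rearrangement is ever needed, or proving a counting lemma robust to rearrangement --- are the natural directions, but neither is carried out in your proposal, and \cref{lemma:Q-Pigeon_to_LongChoice} is currently open; this is precisely why the paper downgrades \cref{thm:main} to a conjecture.
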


In other words, we first prove that there is a polynomial-time reduction from \LocalOPT to \QuotientPigeonCircuit in \cref{lemma:PLS-hardness_of_Q-Pigeon}.
After that, we show a polynomial-time reduction \QuotientPigeonCircuit to \LongChoice in \cref{lemma:Q-Pigeon_to_LongChoice}.
By the transitivity of the polynomial-time reduction, we have a polynomial-time reduction from \LocalOPT to \LongChoice. This implies that \LocalOPT belongs to the complexity class $\PLC$; therefore, we conclude our main theorem: $\PLS \subseteq \PLC$.

The proofs of \cref{lemma:PLS-hardness_of_Q-Pigeon} and \cref{lemma:Q-Pigeon_to_LongChoice} can be found in \cref{proof:PLS-hardness_of_Q-Pigeon} and \cref{proof:Q-Pigeon_to_LongChoice}, respectively.

\subsection{Proof of \cref{lemma:PLS-hardness_of_Q-Pigeon}} \label{proof:PLS-hardness_of_Q-Pigeon}
To prove this lemma, we show a polynomial-time reduction from \LocalOPT to \QuotientPigeonCircuit.
Our proof is inspired by the robustness proof of \EndofPotentialLine by \citeauthor{Ish21} \cite{Ish21}.

Let two Boolean circuits $f : [2^n] \to [2^n]$ and $p: [2^n] \to [2^m]$ be an instance of \LocalOPT.
We first show, in \cref{prop:assumption_for_localopt}, that we can assume that the point $1 \in [2^n]$ has the unit potential, and every point $x \in [2^n]$ with $f(x) \neq x$ satisfies that $p(f(x)) = p(x) + 1$, without loss of generality.

\begin{proposition} \label{prop:assumption_for_localopt}
	For every \LocalOPT instance $\langle f : [2^n] \to [2^n], p: [2^n] \to [2^m] \rangle$, we have a polynomial-time reduction from $\langle f, p \rangle$ to a \LocalOPT instance $\langle F: [2^m] \times [2^n] \to [2^m] \times [2^n], P: [2^m] \times [2^n] \to [2^m] \rangle$ that satisfies the following two properties:
	\begin{enumerate*}[label = (\arabic*)]
		\item For each point $\xi$ in $[2^m] \times [2^n]$ with $F(\xi) \neq \xi$, it holds that $P(F(\xi)) = P(\xi) + 1$; and
		\item we know a special point $v^*$ in $[2^m] \times [2^n]$ with $P(v^*) = 1$.
	\end{enumerate*}
\end{proposition}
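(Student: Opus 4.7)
The plan is to lift $\langle f, p \rangle$ to a new \LocalOPT instance $\langle F, P \rangle$ on $[2^m] \times [2^n]$ whose potential ticks up by exactly one at every non-stationary step, using the first coordinate as a global stopwatch. I would set $P(s, x) := s$ and declare $v^* := (1, 1)$, so property (2) is immediate. The map $F$ simulates a slowed-down version of the original $f$-trajectory: if $x$ is a local maximum of $\langle f, p \rangle$ (that is, $p(f(x)) \le p(x)$), declare $(s, x)$ a fixed point; otherwise set $F(s, x) := (s+1, x)$ while $s + 1 < p(f(x))$, and $F(s, x) := (s+1, f(x))$ as soon as $s + 1 \ge p(f(x))$. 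Along the canonical walk from $v^* = (1, 1)$ the states visited are $(1, 1), (2, 1), \dots, (p(1), 1), (p(1)+1, f(1)), \dots, (p(f(1)), f(1)), \dots$, so the first coordinate literally tracks the step index.

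Property (1) then follows by inspecting the two non-fixed-point branches of the definition: each increments the first coordinate by exactly one, and $P$ simply reads off the first coordinate. Property (2) holds by construction. The recovery map $g$ from a lifted solution to an original one is also immediate: because $P$ strictly increases along every non-stationary $F$-edge, every lifted solution $(s, x)$ must be an $F$-fixed point, and by the definition of $F$ an interior fixed point satisfies $p(f(x)) \le p(x)$, so $g(\langle f, p \rangle, (s, x)) := x$ is a \LocalOPT solution of $\langle f, p \rangle$.

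The main technical point requiring care is the boundary $s = 2^m$: since $P$ is capped at $2^m$, the map $F$ cannot legitimately increment from the top, so any reasonable definition is forced to declare $(2^m, x)$ a fixed point even when $x$ is not a local maximum, potentially introducing spurious stationary points. I would address this by allowing the first coordinate to range over a slightly enlarged set (absorbing a constant or logarithmic number of extra bits into $m$, which preserves the polynomial-time nature of the reduction), so that the canonical walk --- whose length is bounded by the total potential gain plus the number of $f$-moves, hence at most $2^m$ --- never reaches the ceiling, and by carefully extending the definition of $F$ on the enlarged top region so that any adversarially returned state there can be resolved by a short $f$-chase back into the interior. Pinning down this boundary-handling is the part of the argument I expect to require the most care.
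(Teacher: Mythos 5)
Your lifted instance takes the right shape---a synchronization coordinate that ticks by one per step, with $P$ reading off that coordinate and $v^*$ placed at $(1,1)$---and you correctly locate the crux: the ceiling $s = 2^m$. But your proposed repair does not close the gap. Enlarging the first coordinate to $[2^{m'}]$ only pushes the ceiling further out; you must still define $F(2^{m'}, x)$ for \emph{every} $x$, and the adversary is free to hand back any state there. If you make top-row states fixed points, then $(2^{m'}, x)$ is a valid \LocalOPT solution even when $x$ is not a local optimum of $\langle f,p\rangle$, so the recovery map $g$ has nothing to return. Your fallback---``a short $f$-chase back into the interior''---is not short: from an arbitrary $x$ the $f$-walk is only guaranteed to terminate because the potential is bounded by $2^m$, which is exponentially many steps, so it is not a polynomial-time recovery. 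The underlying obstruction is structural: once $P(s,x) := s$ is identically the synchronization coordinate, property~(1) forces $F$ to increment that coordinate on every non-stationary move, and that is precisely what is impossible at the top of the range.

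The paper sidesteps the ceiling by decoupling the potential from the raw first coordinate. It partitions states into \emph{active} ones, where $p'(x) \le i < p'(f'(x))$, and \emph{inactive} ones (everything else, which necessarily includes the top row, since an active state would require $p'(f'(x)) > i \ge 2^m$, impossible). Active states behave like your construction and get $P(i,x)=i$; an inactive $(i,x)$ is ``snapped back'' to $(p'(x),x)$ and is assigned potential $p'(x)-1$, one below the potential of its target. This makes the snap-back a legal $+1$ step, keeps $F$ inside $[2^m]\times[2^n]$ with no enlargement, and guarantees that the only stationary points are states $(p'(x),x)$ with $p'(f'(x)) \le p'(x)$, from which $x$ is immediately a solution of the original instance. (The paper also pre-normalizes so that $p'(1)=1$, which your choice $P(s,x)=s$ happens to make unnecessary for property~(2) alone, but the active/inactive device is the essential missing ingredient.) Without some analogue of this snap-back-and-reweight trick, your reduction as written does not yield a correct recovery map.
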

\begin{proof}
	Let two Boolean circuits $f: [2^n] \to [2^n]$ and $p: [2^n] \to [2^m]$ be an instance of \LocalOPT.
	We first reduce the above instance to another \LocalOPT instance $\langle f': [2^n] \to [2^n], p': [2^n] \to [2^m] \rangle$ satisfying that $p'(1) = 1$.
	For each point $x \in [2^n]$, we define
	\begin{align*}
		f'(x) := \begin{cases}
			f(1) & \text{ if } f(x) = 1 \\
			f(x) & \text{ otherwise,}
		\end{cases}
	\end{align*}
	
	\noindent
	and $p'(x) = p(x)$ if $x \neq 1$, otherwise $p'(x) = 1$.
	It is easy to see that the instance $\langle f', p' \rangle$ holds the desired condition.
	
	What remains is to prove that we can recover a solution to the original instance $\langle f, p \rangle$ from a solution to the new instance $\langle f', p' \rangle$ in polynomial time.
	Let $x \in [2^n]$ be a solution to $\langle f', p' \rangle$; that is, it holds that $p'(x) \ge p'(f'(x))$.
	First, we suppose that $x = 1$. This implies that $p(x) \ge p'(x) = 1 = p'(f'(x)) = p(f(x))$. Hence, the point $x = 1$ is a solution to the original instance.
	Next, we suppose that $x > 1$, $f(x) = 1$, and the special point $1$ is not a solution to $\langle f', p' \rangle$. This implies that $p(x) = p'(x) \ge p'(f'(x)) = p(f(1))$. Therefore, we can see that at least one of $x$ and $f(x)$ is a solution to the original instance $\langle f, p \rangle$.
	Finally, we suppose that $x > 1$ and $f(x) \neq 1$. This implies that $p(x) = p'(x) \ge p'(f'(x)) = p(f(x))$. Thus, the point $x$ is a solution to the original instance.

	We move on to proving another desired condition: For every point, its potential increases at most one.
	We will construct another \LocalOPT instance $\langle F: [2^m] \times [2^n] \to [2^m] \times [2^n] , P: [2^m] \times [2^n] \to [2^m] \rangle$ from the \LocalOPT instance $\langle f': [2^n] \to [2^n], p': [2^n] \to [2^m] \rangle$ such that $P(F(i, x)) = P(i, x) + 1$ for each point $(i, x) \in [2^m] \times [2^n]$ with $F(i, x) \neq (i, x)$.
	Our idea is inspired by \cite[][Theorem 4]{FGMS20}.
	
	For each vertex $(i, x) \in [2^m] \times [2^n]$, we say that $(i, x)$ is {\it active} if it holds that $p'(x) \le i < p'(f'(x))$; the vertex $(i, x)$ is {\it inactive} if it is not active.
	For every inactive vertex $(i, x) \in [2^m] \times [2^n]$, we define the function $F(i, x) := (p(x), x)$.
	For each active vertex $(i, x) \in [2^m] \times [2^n]$, we define the function $F: [2^m] \times [2^n] \to [2^m] \times [2^n]$ as follows:
	\begin{align*}
		F(i, x) := \begin{cases}
			(i+1, x) & \text{ if } p'(x) \le i < p'(f'(x)) - 1, \\
			(p'(f'(x)), f'(x)) & \text{ if } i = p'(f'(x)) - 1.
		\end{cases}
	\end{align*}
	Furthermore, we define the potential function $P: [2^m] \times [2^n] \to [2^m]$ as follows: $P(i, x) = i$ if a vertex $(i, x)$ is active, otherwise $P(i, x) = p'(x) - 1$.
	It is not hard to see that every vertex satisfies the desired condition.
	We can effortlessly obtain a solution to the original instance from every solution to the new instance.
\end{proof}

We move on to describe how to construct the reduced \QuotientPigeonCircuit instance $\langle C: [2^n] \to [2^n], E: [2^n] \times [2^n] \to \{ 0, 1 \} \rangle$.
First, we define the equivalence relation $E: [2^n] \times [2^n] \to \{ 0, 1 \}$ with respect to the \LocalOPT instance $\langle f, p \rangle$.
For all $x, y \in [2^n]$, define $E(x, y) := 1$ if $p(x) = p(y)$; otherwise $E(x, y) = 0$.
It is straightforward to see that the function $E$ satisfies the requirements for the equivalence relation. There is no solution that holds the fourth, fifth, or sixth type of solution.
We define the Boolean circuit $C: [2^n] \to [2^n]$ as follows: $C(x) = f(x)$ for every $x \in [2^n]$.
Finally, we set the special element $v^*$ to be $1$.
We complete constructing a \QuotientPigeonCircuit instance $\langle C, E, v^* \rangle$.

What remains is to prove that we can recover a solution to the original \LocalOPT instance from each solution to the reduced \QuotientPigeonCircuit in polynomial time.
Since the Boolean circuit $E$ certainly computes the equivalence relation over $[2^n]$, every solution to the reduced instance $\langle C, E, v^* \rangle$ is one of the first-, second-, and third-type solutions.

\paragraph{First-type solution}
We first consider where we obtain two elements $x, y \in [2^n]$ such that $x \not\sim_{E} y$ and $C(x) \sim_{E} C(y)$.
This implies that $p(x) \neq p(y)$ but $p(f(x)) = p(f(x))$. Suppose that the point $x$ is not a solution to the \LocalOPT instance (i.e., $f(x) \neq x$ and $p(f(x)) = p(x) + 1$), the other point $y$ is a solution to the \LocalOPT instance. Similarly, suppose that $y$ is not a solution; the other point $x$ is a solution.
Hence, at least one of $x$ and $y$ is a solution to the \LocalOPT instance $\langle f, g \rangle$.

\paragraph{Second-type solution}
Next, we consider the case where we obtain an element $x \in [2^n]$ such that $C(x) \sim_{E} v^*$.
This implies that $1 = p(v^*) = p(f(x)) \le p(x)$. Therefore, the point $x$ is a solution to the \LocalOPT instance $\langle f, g \rangle$.

\paragraph{Third-type solution}
Finally, we consider the case where we obtain two distinct elements $x, y \in [2^n]$ such that $x \sim_{E} y$ and $C(x) \not\sim_{E} C(y)$.
This implies that $p(x) = p(y)$ but $p(f(x)) \neq p(f(y))$. From our assumption, exactly one of the points $x$ and $y$ is a fixed point of $f$. Therefore, we obtain a solution to the original \LocalOPT instance $\langle f, g \rangle$.

\subsection{Proof of \cref{lemma:Q-Pigeon_to_LongChoice}} \label{proof:Q-Pigeon_to_LongChoice}
This section proves that the problem \QuotientPigeonCircuit belongs to the class $\PLC$.
To prove this, we will provide a polynomial-time reduction from \QuotientPigeonCircuit to \ConstrainedLongChoice, a restricted variant of the problem \LongChoice.
Our reduction heavily relies on the $\PPP$-hardness proof of \LongChoice by \citeauthor{PPY23} \cite[][Theorem 2]{PPY23}. \\

\begin{tcolorbox}
	\begin{definition}\label{def:constrainedlongchoice}
		\ConstrainedLongChoice
		
		\noindent
		\textbf{Input}: 
		$n - 1$ Boolean circuits $P_{0}, \dots, P_{n-2}$ such that $P_{i}: ([2^n])^{i+2} \to \{0, 1\}$ for each $i \in \{0, \dots, n-2\}$ 
		and an initial element $a_{0}$ in $[2^n]$
		
		\noindent
		\textbf{Output}: 
		a sequence of $n+1$ distinct elements $a_{0}, a_{1}, \dots, a_{n}$ in $[2^n]$ such that for each $i \in \{0, \dots, n-2\}$, $P_{i}(a_{0} \dots, a_{i}, a_{j})$ is the same for every $j > i$.
	\end{definition}
\end{tcolorbox}

\begin{proposition}[\citeauthor{PPY23} \cite{PPY23}]
	\LongChoice and \ConstrainedLongChoice are polynomial-time reducible to each other.
\end{proposition}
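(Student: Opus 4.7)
The plan is to exhibit two polynomial-time reductions, one in each direction.

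For the easier direction, reducing \LongChoice to \ConstrainedLongChoice, I would pass to the doubled universe $[2^{n+1}] \cong \{0,1\} \times [2^n]$ and fix the \ConstrainedLongChoice initial element to an arbitrary point, say $a_0^\star := (0,0)$. Setting $Q_0(a_0^\star,(c,u)) := c$ forces the first bit of every subsequent pick to be the same, so the second coordinates $u_1,\dots,u_{n+1}$ of $a_1^\star,\dots,a_{n+1}^\star$ are pairwise distinct in $[2^n]$. For each $i \in \{1,\dots,n-1\}$, I would set $Q_i(a_0^\star,a_1^\star,\dots,a_i^\star,a_j^\star) := P_{i-1}(u_1,\dots,u_i,u_j)$. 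Comparing arity ($P_{i-1}$ takes $i+1$ arguments, as does the right-hand side) and index ranges, the \ConstrainedLongChoice constraint on $Q_i$ is precisely the \LongChoice constraint on $P_{i-1}$ applied to $(u_1,\dots,u_{n+1})$, so the projection of any \ConstrainedLongChoice solution is a \LongChoice solution.

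For the harder direction, reducing \ConstrainedLongChoice to \LongChoice, I would once more embed into an enlarged universe such as $[2^{n+1}]$ with $n$ new circuits. The first circuit (or pair of circuits) is used to simultaneously partition by a layer bit (ensuring distinct projections of the subsequent picks) and rule out the encoded copy of $a_0$ among those projections (ensuring that the extracted sequence is disjoint from $a_0$). The later circuits apply $P_0,\dots,P_{n-2}$ to the projections with $a_0$ hardwired in the first argument slot; extracting the projected subsequence and prepending $a_0$ then yields a \ConstrainedLongChoice solution, with the required constraints inherited from the \LongChoice constraints after an index-alignment argument.

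The main obstacle is this second direction. Because Player~1 freely picks the first element in \LongChoice, no predicate can force that element to equal the distinguished $a_0$. Instead, $a_0$ must be hardwired inside the predicates themselves while the free first element is used merely as an unused syntactic slot. This introduces an index shift between the \LongChoice sequence $(x_0,x_1,\dots,x_n)$ and the target \ConstrainedLongChoice sequence $(a_0,b_1,\dots,b_n)$; that shift is absorbed by the extra universe bit, at the price of dedicating one additional padding circuit to the avoidance condition ``the projection does not hit $a_0$.'' The delicate bookkeeping step is to verify that the padding consumes no more predicate slots than the enlarged parameter $n+1$ provides, so that every $P_k$ is still faithfully represented by some $Q_i$.
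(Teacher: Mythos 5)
This proposition is cited from \cite{PPY23}; the paper does not reprove it, so there is no in-text argument to compare against, and your sketch must be judged on its own.

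Your reduction from \LongChoice to \ConstrainedLongChoice is correct but far more elaborate than necessary: since \ConstrainedLongChoice only adds a constraint (it fixes $a_0$), it suffices to reuse the given circuits unchanged with an arbitrary $a_0$, and any \ConstrainedLongChoice solution is already a \LongChoice solution verbatim. No universe doubling or padding circuit is needed.

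The reduction from \ConstrainedLongChoice to \LongChoice is the substantive one, and your sketch has a gap that you flag as ``delicate'' but do not close. Enlarging to $[2^{n+1}]$ buys exactly one extra circuit (from $n-1$ to $n$), yet your plan adds two obligations on top of encoding $P_0,\dots,P_{n-2}$: a layer-bit circuit (so the projections are pairwise distinct) and an avoidance circuit (so no projection equals $a_0$). That is $n+1$ slots against a budget of $n$, and the two jobs cannot share one slot: any circuit whose two level sets each have pairwise distinct second coordinates must separate $(0,a_0)$ from $(1,a_0)$, so each level set still contains an element projecting to $a_0$.

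The missing idea is that avoidance should be done by \emph{substitution inside the predicates}, which costs nothing, and once this is seen no universe enlargement is needed at all. Over $[2^n]$, let $\sigma_{x_0}(y):=x_0$ if $y=a_0$ and $\sigma_{x_0}(y):=y$ otherwise, and set
\[
Q_i(x_0,x_1,\dots,x_i,x_j) \;:=\; P_i\bigl(a_0,\sigma_{x_0}(x_1),\dots,\sigma_{x_0}(x_i),\sigma_{x_0}(x_j)\bigr).
\]
Given any \LongChoice solution $(x_0,\dots,x_n)$ to $(Q_0,\dots,Q_{n-2})$, the sequence $\bigl(a_0,\sigma_{x_0}(x_1),\dots,\sigma_{x_0}(x_n)\bigr)$ has $n+1$ distinct entries (its underlying set is $\{x_0,\dots,x_n\}$ if some $x_m=a_0$, and $\{a_0,x_1,\dots,x_n\}$ otherwise), and the requirement that $P_i(a_0,a_1,\dots,a_i,a_j)$ be constant over $j>i$ is literally the requirement that $Q_i(x_0,\dots,x_i,x_j)$ be constant over $j>i$. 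The free slot $x_0$ is not merely a dead dummy, as your sketch assumes; it is the substitute that absorbs any accidental occurrence of $a_0$ in the tail.
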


Let two Boolean circuits $C: [2^n] \to [2^n]$ and $E: [2^n] \times [2^n] \to \{0, 1\}$ and an element $v^* \in [2^n]$ be an instance of \QuotientPigeonCircuit.
From \cref{prop:no-fixedpoint}, we assume that $C(x) \not\sim_{E} x$ and $C(x) \neq v^*$ for every $x \in [2^n]$, without loss of generality. 
Also, from \cref{prop:assumption_for_quotient-pigeon}, we assume that the $2n$ elements $u_{0} := v^*, u_{1} := C(u_{0}), \dots, u_{i} := C(u_{i-1}), \dots, u_{2n} := C(u_{2n-1})$ are distinct under $\sim_{E}$ each other (i.e., $u_{i} \not\sim_{E} u_{j}$ for all $0 \le i < j \le 2n$), without loss of generality.
We will construct the instance of \ConstrainedLongChoice $\langle P_0, P_1, \dots, P_{n-2}, v^* \rangle$, where $P_{i}: ([2^n])^{i+2} \to \{ 0, 1 \}$ for every $i \in \{ 0, 1, \dots, n-2 \}$.

Before constructing the predicates $P_0, P_1, \dots, P_{n-2}$, we briefly sketch our reduction.
Every solution to \ConstrainedLongChoice is a sequence of distinct elements $a_0, a_1, \dots, a_n$ in $[2^n]$. 
In such a sequence, each element $a_{i}$ is chosen by depending only on the previously chosen elements $a_0, a_1, \dots, a_{i-1}$. 
Here, we use the terminology {\it distinct} to mean that $x \neq y$.
Recall that two distinct elements $x$ and $y$ such that $x \sim_{E} y$ and $C(x) = C(y)$ are not a solution to \QuotientPigeonCircuit.
So, we need to avoid such a {\it bad} solution being constructed as a solution to \ConstrainedLongChoice.
In order to settle such an issue, we arrange the sequence before applying the predicates $P_0, P_1, \dots, P_{n-2}$ that are defined in \cite{PPY23}.
We introduce the sub-procedures $\beta_{0}, \beta_{1}, \dots, \beta_{n-1}$, where for each index $k \in \{ 0, 1 \dots, n-1 \}$, the sub-procedure $\beta_{k}$ maps distinct $k+1$ elements $a_{0}, \dots, a_{k}, a_{k+1}$ to $k+1$ elements $b_{0}, \dots, b_{k}, b_{k+1}$ with a suitable property. 

Roughly speaking, we require that the elements $b_{0}, \dots, b_{n}$ are distinct under $\sim_{E}$ unless we can recover a solution to the original instance in polynomial time.
Each element $b_{k+1}$ depends only on $b_{0}, \dots, b_{k}$, and $a_{k+1}$.
Furthermore, if the sequence $(b_{0}, \dots, b_{n}, b_{n+1})$ contains two distinct elements $b_{i}, b_{j}$ such that  $b_{i} \sim_{E}, b_{j}$, then we can recover a solution to the original \QuotientPigeonCircuit instance $\langle C, E, v^* \rangle$ from $(b_{0}, \dots, b_{n}, b_{n+1})$ in polynomial time.

We first describe how to construct the sub-procedures $\beta_{0}, \dots, \beta_{n-1}$.
These sub-procedures are defined inductively.
We can find the formal structure for each index $k \in \{ 0, \dots, n-1 \}$ in \cref{alg:sub-procedure}.

Let $(a_0, \dots, a_n)$ denote an input sequence of the sub-procedures, where it holds that $a_{i} \neq a_{j}$ for all $0 \le i < j \le n$.
In the base case (i.e., $k = 0$), the sub-procedure $\beta_{0}$ directly outputs $b_{0}$ to be $a_{0}$.
Suppose that for an index $k < n - 1$, we have a sequence of elements $b_{0}, \dots, b_{k}$ constructed by the sub-procedure $\beta_{k - 1}$.
We now define the element $b_{k + 1}$ from the elements $b_{0}, \dots, b_{k}$, and $a_{k+1}$.

First, we check whether the elements $b_{0}, \dots, b_{k}$ are distcinct under $\sim_{E}$. If not, we set it to be $b_{k+1} := a_{k+1}$.
Otherwise, we also check whether $a_{k + 1} \not\sim_{E} b_{i}$ for every $i \in \{ 0, 1, \dots, k \}$. If yes, we define $b_{k+1} := a_{k+1}$.
Also otherwise, we have the elements $b_{0}, \dots, b_{k}$ and $a_{k+1}$ such that 
\begin{enumerate*}[label = (\arabic*)]
	\item $b_{i} \not\sim_{E} b_{j}$ for all $0 \le i < j \le n$; and
	\item there exists an elements $b_{i}$ such that $a_{k+1} \sim_{E} b_{i}$.
\end{enumerate*}
In this case, we verify whether we can recover a solution to the original \QuotientPigeonCircuit $\langle C, E, v^* \rangle$ from the elements $b_{0}, \dots, b_{k}$ and $a_{k+1}$ in polynomial time.
Specifically, we check the following six properties:
\begin{enumerate}
	\item There exist two elements $b_i$ and $b_j$ such that $b_{i} \not\sim_{E} b_{J}$ and $C(b_{i}) \sim_{E} C(b_{j})$;
	\item there is a elements $x$ in $\{ b_{0}, b_{1}, \dots, b_{k}, a_{k+1} \}$ such that $C(x) \sim_{E} v^*$;
	\item there is an element $b_{i}$ such that $b_{i} \sim_{E} a_{k+1}$ and $C(b_{i}) \not\sim_{E} C(a_{k+1})$;
	\item there exists an element $x$ in $\{ b_{0}, b_{1}, \dots, b_{k}, a_{k+1} \}$ such that $E(x, x) = 0$;
	\item there are two elements $x$ and $y$ in $\{ b_{0}, b_{1}, \dots, b_{k}, a_{k+1} \}$ such that $C(x, y) \neq E(y, x)$; and
	\item there exist distinct three elements $x, y, z \in \{ b_{0}, b_{1}, \dots, b_{k}, a_{k+1} \}$ such that $x \sim_{E} y$, $y \sim_{E} z$, and $x \not\sim_{E} z$.
\end{enumerate}
We call the algorithm that performs these above tests $\CheckSolutions$ (see also, \cref{alg:CheckSolutions}).

If it passes at least one of the above six tests, then we can efficiently recover a solution to the original \QuotientPigeonCircuit $\langle C, E, v^* \rangle$. Note that these tests can be computed in polynomial time.
Thus, we define $b_{k+1} := a_{k+1}$.
Finally, if the sequence of elements $(b_{0}, \dots, b_{k}, a_{k+1})$ is rejected by all of the above six tests, then we find the smallest positive integer $\ell$ such that $C^{\ell}(v^*) \not\sim_{E} x$ for every element $x \in \{ b_{0}, \dots, b_{k}, a_{k+1} \}$, and we define $b_{k+1} := C^{\ell}(v^*)$.
Note that such an integer $\ell$ always exists and is bounded by $2n$ from our assumption.

We complete constructing the sub-procedures $\beta_{0}, \dots, \beta_{n-1}$.
It is not hard to see that every sub-procedure $\beta_k$ is polynomial-time computable.
Moreover, a sequence of elements $b_{0}, \dots, b_{n}$ defined by our sub-procedures holds the next proposition.

\begin{proposition} \label{prop:sub-procedure}
	For every positive integer $k \in [n-1]$, and for every equence of elements $a_{0}, \dots, a_{k}, a_{k+1}$ in $[2^n]$ such that $a_{i} \neq a_{j}$ for all $0 \le i < j \le k+1$, if the sequence of elements $b_{0}, \dots, b_{k}, b_{k+1}$ that are produced by $\beta_{k}(a_{0}, \dots, a_{k}, a_{k+1})$ are not distinct under $\sim_{E}$ (i.e., there is a pair of elements $b_{i}$ and $b_{j}$ such that $b_{i} \sim_{E} b_{j}$), then the algorithm $\CheckSolutions(b_{0}, \dots, b_{k}, b_{k+1})$ returns \True.
\end{proposition}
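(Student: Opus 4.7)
The plan is to prove the statement by induction on $k$. For the inductive step, we case-analyze how $\beta_{k}$ chooses $b_{k+1}$ from the recursively produced sequence $(b_{0}, \dots, b_{k}) = \beta_{k-1}(a_{0}, \dots, a_{k})$ together with $a_{k+1}$, and in each case either derive a contradiction with the assumed $\sim_{E}$-collision in the output or extract the \True verdict of $\CheckSolutions$ from the hypothesis.

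Concretely, the definition of $\beta_{k}$ branches into four mutually exclusive cases: \textbf{(i)}~if the prefix $(b_{0}, \dots, b_{k})$ already contains a $\sim_{E}$-collision, then $b_{k+1} := a_{k+1}$; \textbf{(ii)}~if the prefix is $\sim_{E}$-distinct and $a_{k+1} \not\sim_{E} b_{i}$ for every $i \le k$, then $b_{k+1} := a_{k+1}$; \textbf{(iii)}~if the prefix is $\sim_{E}$-distinct, some $b_{i} \sim_{E} a_{k+1}$, and $\CheckSolutions(b_{0}, \dots, b_{k}, a_{k+1})$ returns \True, then $b_{k+1} := a_{k+1}$; \textbf{(iv)}~otherwise, $b_{k+1} := C^{\ell}(v^{*})$ is chosen so that $b_{k+1} \not\sim_{E} x$ for every $x \in \{b_{0}, \dots, b_{k}, a_{k+1}\}$. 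Assuming the output $(b_{0}, \dots, b_{k+1})$ admits a $\sim_{E}$-collision, cases (ii) and (iv) are immediately excluded since in both the whole output is $\sim_{E}$-distinct. Case (iii) yields that $\CheckSolutions(b_{0}, \dots, b_{k}, b_{k+1})$ returns \True by construction. In case (i), the prefix itself has a $\sim_{E}$-collision, so the inductive hypothesis applied to $\beta_{k-1}$ shows that $\CheckSolutions(b_{0}, \dots, b_{k})$ returns \True; appending $b_{k+1} = a_{k+1}$ preserves this verdict because each of the six checks in $\CheckSolutions$ is an existential condition over elements, pairs, or triples of the input sequence, and any such witness in the shorter sequence persists in the longer one. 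The base case is immediate, since for short sequences case (i) cannot arise and the remaining cases are handled exactly as above.

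The main obstacle is the monotonicity step used in case (i). Properties $1$, $2$, $4$, $5$, and $6$ of $\CheckSolutions$ plainly quantify over arbitrary elements or tuples drawn from the input sequence, so their witnesses transfer verbatim upon extension. Property $3$, as written, is phrased asymmetrically in terms of the most recently appended argument; to close the argument cleanly one should read it as searching for any pair $x, y$ in the current sequence with $x \sim_{E} y$ and $C(x) \not\sim_{E} C(y)$ (i.e., any type-$3$ \QuotientPigeonCircuit witness present among the given elements), rather than only those involving a distinguished ``last'' element. Under this natural reading, consistent with the role of $\CheckSolutions$ as a polynomial-time solution extractor, monotonicity holds and the induction closes.
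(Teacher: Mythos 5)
Your proof is correct and follows essentially the same inductive case analysis on the four branches of $\beta_{k}$ as the paper's own argument. You are more careful than the paper, whose terse proof elides the monotonicity of $\CheckSolutions$ under sequence extension (needed in the branch where the prefix already carries a $\sim_{E}$-collision); your explicit observation that the six checks are existential over elements of the input sequence — and your note that the third check, as phrased in \cref{alg:CheckSolutions}, is already symmetric over pairs — closes this gap cleanly.
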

\begin{proof}
	We prove this by induction.
	In the case where $k = 1$, the statement is trivial.
	
	Let $k$ be a positive integer in $[n-1]$, and let $a_{0}, \dots, a_{k}, a_{k+1}$ denote a sequence of elements in $[2^n]$ such that  $a_{i} \neq a_{j}$ for all $0 \le i < j \le k+1$.
	Also, we write $b_{0}, \dots, b_{k}, b_{k+1}$ for the corresponding sequence of elements.
	Suppose that the statement holds for every $i < k$; that is, the algorithm $\CheckSolutions(b_{0}, \dots, b_{i}, b_{i+1})$ returns \True if we have two elements $b_{j^1}$ and $b_{j^2}$ with $0 \le j^1 < j^2 \le i+1$ such that $b_{j^1} \sim_{E} b_{j^2}$.
	We will show that the statement also follows for the index $k$.
	
	We assume that there exist two elements $b_{i}$ and $b_{j}$ with $0 \le i < j \le k+1$ such that $b_{i} \sim_{E} b_{j}$.
	From the inductive supposition, we consider the case where we have an element $b_{j}$ such that $b_{j} \sim_{E} b_{k+1}$.
	By the constrcution of the sub-procedure $\beta_{k}$, the algorithm $\CheckSolutions(b_{0}, \dots, b_{k}, a_{k})$ returns \True because if not, the element $b_{k+1}$ is defined to be distinct with other elements under $\sim_{E}$.
	We set it to be $b_{k+1} := a_{k+1}$, and thus, the algorithm $\CheckSolutions(b_{0}, \dots, b_{k}, b_{k+1})$ returns \True.
\end{proof}

We move on to constructing the predicates $P_{0}, P_{1}, \dots, P_{n-2}$, where $P_{i}: ([2^n])^{i + 2} \to \{ 0, 1 \}$ for each $i = 0, 1, \dots, n-2$.
Our structure is straightforward: After applying our sub-procedures,  we apply the predictions defined by \citeauthor{PPY23} \cite{PPY23}.
For the self-containment, we now describe the definition of a sequence of finite sets $B_0, \dots, B_i, \dots$ and $F_0, \dots, F_i, \dots$.
We proceed with an inductive definition.

In the base case, we define $B_0 := [2^n] \setminus \{ v^* \}$.
Since $a_0 := v^*$ and $C(v^*) \not\sim_{E} v^*$, the unfilled set $\UnfilledSet{B_{0}}{a_0}{C}$ has size $2^n - 2$.
We define $F_{0} := B_{0}[\kappa]$, where $\kappa = 2^{n-1} - 1$ if $C(a_{0}) > 2^{n-1} - 1$; otherwise $\kappa = 2^{n-1}$.
Then, the finite set $F_0$ has size $2^{n-1} - 1$.

Suppose that we have a sequence of elements $a_0, \dots, a_k$ such that $a_0 = v^*$, and $B_i$ and $F_i$ are defined for every $i < k$.
Here, we denote by $b_0, \dots, b_k$ the sequence of elements that are outputs of $\beta_{k-1}(a_0, \dots, a_k)$.
We first define the finite set $B_k$ using the following rules:
\begin{enumerate}[label = (\Roman*)]
	\item If $B_{k-1}$ is a singleton, then $B_{k} = F_{k} = B_{k-1}$.
	\item Otherwise, if $C(b_k)$ belongs to $F_{k-1}$, then $B_{k} = F_{k-1}$. If $C(b_k)$ is not in $F_{k-1}$, then $B_{k} = B_{k-1} \setminus F_{k-1}$.
\end{enumerate}
Finally, we define $F_{k}$. 
Let $\kappa$ denote the smallest integer such that \[ | \UnfilledSet{B_{k}}{b_{0}, \dots, b_{k}}{C}[\kappa] | = \left \lceil \frac{|\UnfilledSet{B_{k}}{b_{0}, \dots, b_{k}}{C}|}{2} \right \rceil. \]
We define $F_{k} = B_{k}[\kappa]$.

From the above construction, we can see that $B_{0} \supseteq B_{1} \supseteq \cdots \supseteq B_{i} \supseteq \cdots \supseteq B_{n}$ and $F_{i} \subseteq B_{i}$ for every $i \in \{ 0, 1, \dots, n - 1 \}$.

To complete constructing the \ConstrainedLongChoice instance, we define the predicate function $P_{k}$ for each index $k$ in $\{ 0, 1, \dots, n-2 \}$ as follows:
\begin{align*}
	P_{k}(a_{0}, \dots, a_{k}, x) = \begin{cases}
		1 & \text{ if } C(b_{k+1}) \in F_{k}\\
		0 & \text{ if } C(b_{k+1}) \not\in F_{k},
	\end{cases}
\end{align*}
where $(b_0, \dots, b_{k+1}) = \beta_{k}(a_{0}, \dots, a_{k}, x)$.

We now obtain the reduced \ConstrainedLongChoice instance $\langle P_{0}, P_{1}, \dots, P_{n-2}, v^* \rangle$.
What remains is to prove that a feasible sequence for our instance allows us to recover a solution to the original \QuotientPigeonCircuit instance $\langle C, E, v^* \rangle$.

Let $a_0, a_1, \dots, a_n$ be a feasible sequence; that is, it holds that $a_{0} = v^*$, $a_{i} \neq a_{j}$ for all $0 \le i < j \le n$, and for each index $i \in \{0, \dots, n-2 \}$, $P_{i}(a_{0}, \dots, a_{i}, a_{j})$ are the same for every $j > i$.
Let $b_0, \dots, b_{n-1}, b_n$ denote the elements that are outputs of the sub-procedure $\beta_{n-1}(a_{0}, \dots, a_{n-1}, a_{n})$.
It suffices to show that $\CheckSolutions(b_0, \dots, b_{n-1}, b_{n})$ returns \True.

For the sake of contradiction, we suppose that $\CheckSolutions(b_0, \dots, b_{n-1}, b_{n})$ returns \False.
From \cref{prop:sub-procedure}, it satisfies that $b_{i} \not\sim_{E} b_{j}$ for all $0 \le i < j \le n$. That is, the elements $b_{0}, \dots, b_{n}, b_{n+1}$ are distinct.
Also, we have no element $x$ such that $C(x) \neq v^*$ from our assumption.
Furthermore, the following two properties hold.

\begin{proposition}[\citeauthor{PPY23} \cite{PPY23}] \label{prop:key_properties_PPY23}
	\begin{enumerate*}[label = (\arabic*)]
		\item \label{PPY23_property1} For each $i \in \{ 0, \dots, n \}$, $C(b_{j}) \in B_{i}$ for every $j > i$.
		\item \label{PPY23_property2} For every $i \in \{ 0, \dots, n-2 \}$, $| \UnfilledSet{B_{i}}{b_{0}, \dots, b_{i}}{C} | = 2^{n-i} - 2$.
	\end{enumerate*}
\end{proposition}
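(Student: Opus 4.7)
The plan is to proceed by induction on $i$, proving both properties simultaneously; this is the argument used by \citeauthor{PPY23} \cite{PPY23} for the original \LongChoice sequence, now transplanted to the modified sequence $b_0, \dots, b_n$ produced by the sub-procedures $\beta_0, \dots, \beta_{n-1}$.

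For the base case $i = 0$, property \ref{PPY23_property1} asks that $C(b_j) \in B_0 = [2^n] \setminus \{v^*\}$ for every $j > 0$, which is immediate from \cref{prop:no-fixedpoint} since no element of $[2^n]$ maps to $v^*$ under $C$. Property \ref{PPY23_property2} asks that the unfilled set has size $2^n - 2$; as $b_0 = v^*$ and $C(v^*) \in B_0$ by the no-fixed-point assumption, we have $\UnfilledSet{B_0}{b_0}{C} = B_0 \setminus \{C(v^*)\}$, of size $(2^n - 1) - 1 = 2^n - 2$.

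For the inductive step, assume both properties at level $i$. The feasibility of the \ConstrainedLongChoice solution forces $P_i(a_0, \dots, a_i, a_j)$ to be constant across $j > i$, which by the definition of $P_i$ pins down whether $C(b^{(j)}_{i+1}) \in F_i$ uniformly in $j$, where $b^{(j)}_{i+1}$ denotes the last coordinate of $\beta_i(a_0, \dots, a_i, a_j)$. The case $j = i+1$ determines whether $B_{i+1} = F_i$ (the ``lower half'' of $B_i$) or $B_{i+1} = B_i \setminus F_i$ (the ``upper half''), and combining this constancy with the inductive hypothesis $C(b_j) \in B_i$ then yields $C(b_j) \in B_{i+1}$ for all $j > i+1$. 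For property \ref{PPY23_property2}, the balanced choice of $\kappa$ in the definition of $F_i$ is designed so that the unfilled set halves cleanly; writing out this calculation, together with the observation that the pairwise distinctness of $b_0, \dots, b_{i+1}$ under $\sim_E$ (ensured since otherwise $\CheckSolutions$ would return \True, by \cref{prop:sub-procedure}) forces $C(b_{i+1})$ to be a fresh element of the unfilled set, yields the claimed size $2^{n-i-1} - 2$.

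The main obstacle will be the propagation step in property \ref{PPY23_property1}: the feasibility condition on $P_i$ controls $C(b^{(j)}_{i+1})$ rather than $C(b_j)$ itself, because the $(i+1)$-th output of $\beta_i(a_0, \dots, a_i, a_j)$ can differ from the $j$-th output of $\beta_{n-1}(a_0, \dots, a_n)$ whenever $a_j$ collides under $\sim_E$ with an earlier $b_k$ or with earlier elements inside $\beta$. To close the induction one must either argue that under the standing distinctness hypothesis these two elements coincide, or that any discrepancy itself yields a \QuotientPigeonCircuit solution via $\CheckSolutions$. Making this bridge precise is the delicate point on which the entire reduction hinges, and it is exactly where the argument needs to be verified most carefully.
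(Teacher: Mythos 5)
You have put your finger on exactly the issue that makes this proposition fail for the modified sequence $b_0, \dots, b_n$, and you correctly flag it as ``the delicate point on which the entire reduction hinges'' --- but you leave the bridge open, and in fact no such bridge exists. The proposition as stated by \citeauthor{PPY23} \cite{PPY23} holds for their original \LongChoice construction, where the predicates act directly on the solution sequence. In this paper's reduction, however, the predicate $P_i(a_0, \dots, a_i, a_j)$ inspects the last coordinate of $\beta_i(a_0, \dots, a_i, a_j)$ (your $b^{(j)}_{i+1}$), while \cref{PPY23_property1} asserts something about $b_j$, the $j$-th output of $\beta_{n-1}(a_0, \dots, a_n)$. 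These two elements need not coincide: whether $\beta$ replaces $a_j$ by some $C^{\ell}(v^*)$ or leaves it alone depends on the intervening prefix $b_{i+1}, \dots, b_{j-1}$, which the short call $\beta_i(a_0, \dots, a_i, a_j)$ never sees. Consequently the constancy of $P_i(a_0, \dots, a_i, \cdot)$ over $j > i$ constrains $C(b^{(j)}_{i+1})$ and not $C(b_j)$, and the inductive step you propose for \cref{PPY23_property1} does not close.

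The corrigendum attached to this paper acknowledges precisely this point: ``we cannot guarantee that the first property holds since we arranged the input sequence by the above replacement.'' In other words, the gap you identified is not a missing lemma waiting to be supplied; it is the error that invalidates the paper's claimed containment $\PLS \subseteq \PLC$. \cref{prop:key_properties_PPY23} is correct as a statement about the unmodified \citeauthor{PPY23} construction, but it should not be expected to hold for the $\beta$-modified sequence, and attempting to prove it in that setting is the wrong goal.
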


From \cref{PPY23_property2} of \cref{prop:key_properties_PPY23}, we have that $| \UnfilledSet{B_{n-2}}{b_{0}, \dots, b_{n-2}}{C} | = 2$.
Recall the definition of the subset $F_{n-2}$. It satisfies that $| \UnfilledSet{F_{n-2}}{b_{0}, \dots, b_{n-2}}{C} | = 1$ and $| \UnfilledSet{B_{n-2} \setminus F_{n-2}}{b_{0}, \dots, b_{n-2}}{C} | = 1$.
Since the $P_{n-2}(a_{0}, \dots, a_{n-2}, a_{n-1})$ and $P_{n-2}(a_{0}, \dots, a_{n-2}, a_{n})$ are the same, exactly one of the following holds: 
\begin{enumerate*}[label = (\roman*)]
	\item $C(b_{n-1})$ and $C(b_{n})$ are in $F_{n-2}$; and
	\item $C(b_{n-1})$ and $C(b_{n})$ are in $B_{n - 2} \setminus F_{n-2}$.
\end{enumerate*}
Therefore, we have a collision, which contradicts from $\CheckSolutions(b_0, \dots, b_{n-1}, b_{n})$ returns \False.

\section{Conclusion and Open Questions}
This short paper has investigated the computational aspects of \StonePicking.
We have shown the $\PLS$-hardness of \LongChoice, a $\TFNP$ formulation of \StonePicking.
Our result implies that the complexity class $\PLC$ also contains the class $\PLS$.
Furthermore, we have introduced the new $\TFNP$ problem \QuotientPigeonCircuit that is $\PPP$- and $\PLS$-hard as a by-product.

This short paper has left the following open questions:
\begin{enumerate}[label = (\roman*)]
	\item Does $\PLC$ contain $\PPA$? Thus, does $\PLC$ unify traditional $\TFNP$ subclasses?
	\item Is the problem \QuotientPigeonCircuit $\PLC$-hard? In other words, is \QuotientPigeonCircuit $\PLC$-complete?
	\begin{itemize}
		\item As a matter of course, a natural $\PLC$-complete problem is still unknown.
		\item We are also interested in the relationship between \QuotientPigeonCircuit and the problem \UnaryLongChoice, a $\TFNP$ formulation of the non-interactive variant of \StonePicking \cite[][Section 2]{PPY23}.
	\end{itemize}
\end{enumerate}

Finally, we remark that the concept of our $\TFNP$ problem \QuotientPigeonCircuit has come from \cite{Ish21} (and also \cite{HG18}).
\citeauthor{Ish21} \cite{Ish21} has shown the robustness of \EndofPotentialLine. To prove this, he has constructed a reduction by regarding several nodes of the original instance as one node of the reduced instance\footnote{Previously, \citeauthor{HG18} \cite{HG18} have proven the robustness of \EndofLine using the similar approach.}. Such an approach can be viewed as a quotient from the mathematical perspective. 
This short paper formulates a search problem over a quotient set by extending their ideas.
The primal purpose of this work has been to characterize the complexity of the variants with super-polynomially many known sources of \EndofLine and \EndofPotentialLine. It is still open whether such variants are also $\PPAD$- and $\EOPL$-complete, respectively.
We believe that a search problem on a quotient set helps us advance our understanding of $\TFNP$ problems.
For example, a computational problem related to the Chevalley-Warning theorem \cite{GKSZ20} is one of $\TFNP$ problems on quotient sets.

\section*{Acknowledgment}
This work was partially supported by JST, ACT-X, Grant Number JPMJAX2101.
I would like to appreciate Pavel Hub{\'{a}}cek for taking the time to talk about $\TFNP$ problems privately between September and October 2023.
During the discussion, I identified the future direction of my study; before that, I had been uncertain about the direction of my study.

\printbibliography

@article{BO06,
	author = {Joshua Buresh-Oppenheim},
	title = {On the TFNP complexity of factoring},
	year = {2006},
	url = {https://www.cs.toronto.edu/~bureshop/factor.pdf}
}

@article{CDT09,
  author       = {Xi Chen and
                  Xiaotie Deng and
                  Shang{-}Hua Teng},
  title        = {Settling the complexity of computing two-player Nash equilibria},
  journal      = {J. {ACM}},
  volume       = {56},
  number       = {3},
  pages        = {14:1--14:57},
  year         = {2009},
  doi          = {10.1145/1516512.1516516}
}

@inproceedings{DFM22,
  author       = {Argyrios Deligkas and
                  John Fearnley and
                  Themistoklis Melissourgos},
  title        = {Pizza Sharing Is PPA-Hard},
  booktitle    = {Thirty-Sixth {AAAI} Conference on Artificial Intelligence},
  pages        = {4957--4965},
  publisher    = {{AAAI} Press},
  year         = {2022},
  doi          = {10.1609/AAAI.V36I5.20426}
}

@article{DGP09,
  author       = {Constantinos Daskalakis and
                  Paul W. Goldberg and
                  Christos H. Papadimitriou},
  title        = {The complexity of computing a Nash equilibrium},
  journal      = {Commun. {ACM}},
  volume       = {52},
  number       = {2},
  pages        = {89--97},
  year         = {2009},
  doi          = {10.1145/1461928.1461951}
}

@inproceedings{DP11,
  author       = {Constantinos Daskalakis and
                  Christos H. Papadimitriou},
  title        = {Continuous Local Search},
  booktitle    = {Proceedings of the Twenty-Second Annual {ACM-SIAM} Symposium on Discrete Algorithms},
  pages        = {790--804},
  publisher    = {{SIAM}},
  year         = {2011},
  doi          = {10.1137/1.9781611973082.62}
}

@article{FGMS20,
  author       = {John Fearnley and
                  Spencer Gordon and
                  Ruta Mehta and
                  Rahul Savani},
  title        = {Unique end of potential line},
  journal      = {Journal of Computer and System Sciences},
  volume       = {114},
  pages        = {1--35},
  year         = {2020},
  doi          = {10.1016/J.JCSS.2020.05.007}
}

@inproceedings{FRG18,
  author       = {Aris Filos{-}Ratsikas and
                  Paul W. Goldberg},
  title        = {Consensus halving is PPA-complete},
  booktitle    = {Proceedings of the 50th Annual {ACM} {SIGACT} Symposium on Theory of Computing},
  pages        = {51--64},
  publisher    = {{ACM}},
  year         = {2018},
  doi          = {10.1145/3188745.3188880}
}

@inproceedings{GKSZ20,
  author       = {Mika G{\"{o}}{\"{o}}s and
                  Pritish Kamath and
                  Katerina Sotiraki and
                  Manolis Zampetakis},
  title        = {On the Complexity of Modulo-q Arguments and the Chevalley - Warning Theorem},
  booktitle    = {35th Computational Complexity Conference, {CCC} 2020, July 28-31,
                  2020, Saarbr{\"{u}}cken, Germany (Virtual Conference)},
  series       = {LIPIcs},
  volume       = {169},
  pages        = {19:1--19:42},
  publisher    = {Schloss Dagstuhl - Leibniz-Zentrum f{\"{u}}r Informatik},
  year         = {2020},
  doi          = {10.4230/LIPICS.CCC.2020.19}
}

@inproceedings{GHH23,
  author       = {Paul W. Goldberg and
                  Kasper H{\o}gh and
                  Alexandros Hollender},
  title        = {The Frontier of Intractability for {EFX} with Two Agents},
  booktitle    = {Algorithmic Game Theory - 16th International Symposium},
  series       = {Lecture Notes in Computer Science},
  volume       = {14238},
  pages        = {290--307},
  publisher    = {Springer},
  year         = {2023},
  doi          = {10.1007/978-3-031-43254-5\_17}
}

@inproceedings{GH0MPRT22,
  author       = {Mika G{\"{o}}{\"{o}}s and
                  Alexandros Hollender and
                  Siddhartha Jain and
                  Gilbert Maystre and
                  William Pires and
                  Robert Robere and
                  Ran Tao},
  title        = {Further Collapses in {TFNP}},
  booktitle    = {37th Computational Complexity Conference},
  series       = {LIPIcs},
  volume       = {234},
  pages        = {33:1--33:15},
  publisher    = {Schloss Dagstuhl - Leibniz-Zentrum f{\"{u}}r Informatik},
  year         = {2022},
  doi          = {10.4230/LIPICS.CCC.2022.33}
}

@article{GP18,
  author       = {Paul W. Goldberg and
                  Christos H. Papadimitriou},
  title        = {Towards a unified complexity theory of total functions},
  journal      = {Journal of Computer and System Sciences},
  volume       = {94},
  pages        = {167--192},
  year         = {2018},
  doi          = {10.1016/J.JCSS.2017.12.003}
}

@article{HG18,
  author       = {Alexandros Hollender and
                  Paul W. Goldberg},
  title        = {The Complexity of Multi-source Variants of the End-of-Line Problem, and the Concise Mutilated Chessboard},
  journal      = {Electron. Colloquium Comput. Complex.},
  volume       = {{TR18-120}},
  year         = {2018},
  url          = {https://eccc.weizmann.ac.il/report/2018/120},
  eprinttype   = {ECCC},
  eprint       = {TR18-120}
}

@inproceedings{HV21,
  author       = {Pavel Hub{\'{a}}cek and
                  Jan V{\'{a}}clavek},
  title        = {On Search Complexity of Discrete Logarithm},
  booktitle    = {46th International Symposium on Mathematical Foundations of Computer Science},
  series       = {LIPIcs},
  volume       = {202},
  pages        = {60:1--60:16},
  publisher    = {Schloss Dagstuhl - Leibniz-Zentrum f{\"{u}}r Informatik},
  year         = {2021},
  doi          = {10.4230/LIPICS.MFCS.2021.60}
}

@article{Ish21,
  author       = {Takashi Ishizuka},
  title        = {The complexity of the parity argument with potential},
  journal      = {Journal of Computer and System Sciences},
  volume       = {120},
  pages        = {14--41},
  year         = {2021},
  doi          = {10.1016/J.JCSS.2021.03.004}
}

@article{Jer16,
  author       = {Emil Jer{\'{a}}bek},
  title        = {Integer factoring and modular square roots},
  journal      = {Journal of Computer and System Sciences},
  volume       = {82},
  number       = {2},
  pages        = {380--394},
  year         = {2016},
  doi          = {10.1016/J.JCSS.2015.08.001}
}

@article{JPY88,
  author       = {David S. Johnson and
                  Christos H. Papadimitriou and
                  Mihalis Yannakakis},
  title        = {How Easy is Local Search?},
  journal      = {Journal of Computer and System Sciences},
  volume       = {37},
  number       = {1},
  pages        = {79--100},
  year         = {1988},
  doi          = {10.1016/0022-0000(88)90046-3}
}

@article{MP91,
  author       = {Nimrod Megiddo and
                  Christos H. Papadimitriou},
  title        = {On Total Functions, Existence Theorems and Computational Complexity},
  journal      = {Theoretical Computer Science},
  volume       = {81},
  number       = {2},
  pages        = {317--324},
  year         = {1991},
  doi          = {10.1016/0304-3975(91)90200-L}
}

@article{Pap94,
  author       = {Christos H. Papadimitriou},
  title        = {On the Complexity of the Parity Argument and Other Inefficient Proofs
                  of Existence},
  journal      = {Journal of Computer and System Sciences},
  volume       = {48},
  number       = {3},
  pages        = {498--532},
  year         = {1994},
  doi          = {10.1016/S0022-0000(05)80063-7}
}

@article{Pud15,
  author       = {Pavel Pudl{\'{a}}k},
  title        = {On the complexity of finding falsifying assignments for Herbrand disjunctions},
  journal      = {Arch. Math. Log.},
  volume       = {54},
  number       = {7-8},
  pages        = {769--783},
  year         = {2015},
  doi          = {10.1007/s00153-015-0439-6}
}

@inproceedings{PPY23,
  author       = {Amol Pasarkar and
                  Christos H. Papadimitriou and
                  Mihalis Yannakakis},
  title        = {Extremal Combinatorics, Iterated Pigeonhole Arguments and Generalizations of {PPP}},
  booktitle    = {14th Innovations in Theoretical Computer Science Conference},
  series       = {LIPIcs},
  volume       = {251},
  pages        = {88:1--88:20},
  publisher    = {Schloss Dagstuhl - Leibniz-Zentrum f{\"{u}}r Informatik},
  year         = {2023},
  doi          = {10.4230/LIPICS.ITCS.2023.88}
}

@inproceedings{SZZ18,
  author       = {Katerina Sotiraki and
                  Manolis Zampetakis and
                  Giorgos Zirdelis},
  title        = {PPP-Completeness with Connections to Cryptography},
  booktitle    = {59th {IEEE} Annual Symposium on Foundations of Computer Science},
  pages        = {148--158},
  publisher    = {{IEEE} Computer Society},
  year         = {2018},
  doi          = {10.1109/FOCS.2018.00023}
}

\clearpage
\appendix

\section{Procedures}

\begin{algorithm}
\caption{The sub-procedure $\beta_{k}$ for $k \in \{ 0, 1, \dots, n \}$} \label{alg:sub-procedure}
\begin{algorithmic}[1]
\Require a sequence $(a_{0}, \dots, a_{k}, a_{k+1})$ on $[2^n]$ such that $a_{i} \neq a_{j}$ for all $0 \le i < j \le k+1$
\Ensure  a sequence $(b_{0}, \dots, b_{k}, b_{k+1})$ on $[2^n]$
\State $b_0 \leftarrow a_0$
\State Suppose that we have a sequence $(b_{0}, \dots, b_{k})$ by using the sub-procedures $\beta_0, \dots, \beta_{k-1}$ inductively.
\If{Exists a pair $b_i, b_j$ such that $b_i \sim_{E} b_j$}\label{algln2}
    \State $b_{k+1} \leftarrow a_{k+1}$
\ElsIf{$a_{k+1} \not\sim_{E} b_{i}$ for every $i \in \{ 0, 1, \dots, k\}$}
	\State $b_{k+1} \leftarrow a_{k+1}$
\ElsIf{$\CheckSolutions(b_0, \dots, b_k, a_{k+1})$ returns \True}
	\State $b_{k+1} \leftarrow a_{k+1}$
\Else
	\State Find the smallest positive integer $\ell$ such that $C^{\ell}(a_{0}) \not\sim_{E} b_{i}$ for every $i \in \{ 0, 1, \dots, k\}$
	\State $b_{k+1} \leftarrow C^{\ell}(a_{0})$
\EndIf
\end{algorithmic}
\end{algorithm}

\begin{algorithm}
\caption{The algorithm $\CheckSolutions$ that decides whether a solution to \QuotientPigeonCircuit exists} \label{alg:CheckSolutions}
\begin{algorithmic}[1]
\Require a sequence of elements $\xi_{0}, \dots, \xi_{k}, \xi_{k+1}$ in $[2^n]$
\Ensure Ether $\True$ or $\False$
\If{There exist two elements $\xi_{i}$ and $\xi_{j}$ such that $\xi_{i} \sim_{E} \xi_{j}$ and $C(\xi_{i}) \sim_{E} C(\xi_{j})$} 
\State \Return \True 
\ElsIf{There is an element $\xi \in \{ \xi_{0}, \dots, \xi_{k}, \xi_{k+1} \}$ such that $C(\xi) \sim_{E} v^*$} 
\State \Return \True
\ElsIf{There are two elements $\xi, \eta \in \{ \xi_{0}, \dots, \xi_{k}, \xi_{k+1} \}$ such that $\xi \sim_{E} \eta$ and $C(\xi) \not\sim_{E} C(\eta)$} 
\State \Return \True
\ElsIf{There exists an element $\xi \in \{ \xi_{0}, \dots, \xi_{k}, \xi_{k+1} \}$ such that $E(\xi, \xi) = 0$} 
\State \Return \True
\ElsIf{There are two elements $\xi, \eta \in \{ \xi_{0}, \dots, \xi_{k}, \xi_{k+1} \}$ such that $E(\xi, \eta) = E(\eta, \xi)$} 
\State \Return \True
\ElsIf{There exist distinct three elements $\xi, \eta, \zeta \in \{ \xi_{0}, \dots, \xi_{k}, \xi_{k+1} \}$ such that $\xi \sim_{E} \eta$, $\eta \sim_{E} \zeta$, and $\xi \not\sim_{E} \zeta$} 
\State \Return \True
\Else
\State \Return \False
\EndIf
\end{algorithmic}
\end{algorithm}

%

\end{document}